\newcommand{\np}{{NP}}
\newcommand{\fpt}{{FPT}}
\newcommand{\mmbcs}{{\textsc{Balanced Connected Subgraph}}}
\newcommand{\old}[1]{{}}
\newcommand{\bcs}{\textsc{BCS}}
\newcommand{\rst}{{\textit{RST}}}
\newcommand{\colg}[1]{{\color{green!30!black}{{\textit{#1}}}}}
\newcommand{\colb}[1]{{\color{blue}{{\textit{#1}}}}}
\newcommand{\colr}[1]{{\color{red}{{\textit{#1}}}}}
\begin{document}
\title{Balanced Connected Subgraph Problem in Geometric Intersection Graphs}
%
%
\author{Sujoy Bhore\inst{1}\and
Satyabrata Jana\inst {2}\and 
Supantha Pandit\inst{3}\and
Sasanka Roy\inst{2}}
\authorrunning{S. Bhore et al.}
%
\institute{Algorithms and Complexity Group, TU Wien, Vienna, Austria \and Indian Statistical Institute, Kolkata, India \and
Dhirubhai Ambani Institute of Information and Communication Technology, Gandhinagar, Gujarat, India
\\ \email{\{sujoy.bhore, satyamtma, pantha.pandit, sasanka.ro\}@gmail.com}}

\maketitle              
%
\begin{abstract}
We study the \colg{\mmbcs}~(shortly, \colg{\textbf{\bcs}}) problem on geometric intersection graphs such as interval, circular-arc, permutation, unit-disk, outer-string graphs, etc. Given a \colg{vertex-colored} graph $G=(V,E)$, where each vertex in $V$ is colored with either ``\colr{red}'' or ``\colb{blue}'', the \bcs~problem seeks a maximum cardinality induced connected subgraph $H$ of $G$ 
such that $H$ is \colg{color-balanced}, i.e., $H$ contains an equal number of red and blue vertices. We study the computational complexity landscape of the \bcs~problem while considering geometric intersection graphs. On one hand, we prove that the \bcs~problem is \np-hard on the unit disk, outer-string, complete grid, and unit square graphs. On the other hand, we design polynomial-time algorithms for the \bcs~problem on interval, circular-arc and permutation graphs. In particular, we give algorithm for the \colg{\textsc{Steiner Tree}} problem on both the interval graphs and circular arc graphs, that is used as a subroutine for solving \bcs~problem on same graph classes. Finally, we present a \fpt~algorithm for the \bcs~problem on general graphs. 


\keywords{Balanced connected subgraph \and Interval graphs  \and  Permutation graphs \and Circular-arc graphs \and  Unit-disk graphs \and Outer-string graphs \and  \np-hard \and Color-balanced \and Fixed parameter tractable.}
\end{abstract}
\section{Introduction}

The \colg{intersection graph} of a collection of sets is a graph where each vertex of the graph represents a set and there is an edge between two vertices if their corresponding sets intersect. Any graph can be represented as an intersection graph over some sets. Geometric intersection graph families consist  intersection graphs for which the underlying collection of sets are some geometric objects. Some of the important graph classes in this family are interval graphs (intervals on real line), circular-arc graphs(arcs on a circle), permutation graphs (line segments with endpoints lying on two parallel lines), unit-disk graphs (unit disks in the Euclidean plane), unit-square graphs (unit squares in the Euclidean plane), outer-string graphs (curves lying inside a disk, with one endpoint on the boundary of the disk), etc.  In the past several decades, geometric intersection graphs became very popular and extensively studied due to their interesting theoretical properties and applicability. 

In this paper, we consider an interesting problem on general vertex-colored graphs called the \colg{\mmbcs} (shortly, \colg{\bcs}) problem.  A subgraph $H=(V',E')$ of $G$ is called \colg{color-balanced} if it contains an equal number of red and blue vertices.

\begin{tcolorbox}[colback=red!5!white]
	\noindent {{\textcolor{red!50!black} {\textbf{\textsc{Balanced Connected Subgraph} (\bcs) Problem} }}}\\
	{\bf Input:} A graph $G=(V,E)$, with node set $V=V_R\cup V_B$ partitioned into red nodes ($V_R$) and blue nodes ($V_B$).\\
	{\bf Output:} Maximum-sized color-balanced induced connected subgraph.
\end{tcolorbox}
\vspace{-3mm}

\subsection{Previous Work}  Previously the \bcs~problem  has been studied  on various graph families such as trees, planar graphs, bipartite graphs, chordal graphs, split graphs, etc \cite{bhore2019balanced}. Most of the findings suggest that the problem is \np-hard for general graph classes, and it is possible to design a polynomial algorithm for restricted classes with involved approaches. In that paper, we have pointed out a connection between the \bcs~problem and  \colg{graph motif} problem (see, e.g., \cite{Bonnet2017,Fellows2011,Lacroix2016}). In the graph motif problem, we are given the input as a graph $G = (V,E)$, a color function $ col: V \rightarrow \mathcal{C}$ on the vertices, and a multiset $M$, called motif, of colors of $\mathcal{C}$; the objective is to find a subset $V'\subseteq V$ such that the induced subgraph on $V'$ is connected and $col(V')=M$. Note that, if $\mathcal{C}=\{$red, blue$\}$ and the motif has the same number of red and blues then, the solution of the graph motif problem gives a balanced connected subgraph. Indeed, a solution to the graph motif problem provides one balanced connected subgraph, with an impact of a polynomial factor in the running time. However, it does not guarantee the maximum size balanced connected subgraph. Nonetheless, the NP-hardness result for the \bcs~problem on any particular graph class implies the NP-hardness result for the graph motif problem on the same class. 
Graph motif problem has wide range of applications in bioinformatics \cite{bodlaender1995intervalizing}, DNA physical mapping \cite{fellows1993dna}, perfect phylogeny \cite{bodlaender1992two}, metabolic network analysis \cite{lacroix2006motif}, protein–protein interaction networks and phylogenetic analysis \cite{bodlaender2007quadratic}.  This problem  was introduced in the context of detecting patterns that occur in  the interaction networks between chemical compounds and/or reactions \cite{lacroix2006motif}. 
 
In this work, we revisit the \bcs~problem on some popular geometric intersection graphs such as interval graphs, circular-arc graphs, permutation graphs, unit-disk graphs, outer-string graphs, unit-square graphs, complete-grid graphs. There are many graph theoretic problems that are \np-hard for general graphs but polynomially solvable while considering geometric intersection graphs. For example, The clique decision problem is \np-complete for general graph \cite{karp1972reducibility}, however polynomially solvable for interval graphs \cite{imai1983finding}, circular-arc graphs \cite{imai1983finding}, permutation graphs \cite{pecher2010clique}, unit-disk graphs \cite{clark1991unit}.
Our hope is to exploit the geometric properties of these 
restricted graph families to achieve theoretical results. 

\subsection{Our Results} 
We present a collection of results on the \bcs~problem on geometric intersection graphs, that advances the study of this problem on diverse graph families.

\begin{itemize}
\item[\ding{229}] On the hardness side, in Section~\ref{hard}, we show that the \bcs~problem is NP-hard on unit disk graphs, outer-string graphs, complete grid graphs, and unit square graphs. 

\item[\ding{229}] On the algorithmic side, in Section~\ref{algo}, we design polynomial-time algorithms for interval graphs ($\mathcal{O}(n^4\log n)$ time),
circular-arc graphs ($\mathcal{O}(n^6\log n)$ time) and 
permutation graphs ($\mathcal{O}(n^6)$).
Moreover, we give an algorithm for the \textsc{Steiner Tree} problem on the interval graphs, that is used as a subroutine in the algorithm of the \bcs~problem for intervals graphs.
Finally, we show that the \bcs~problem is fixed-parameter tractable for general graphs ($ 2^{\mathcal{O}(k)}n^2 \log n $) while parameterized by the number of vertices in a balanced connected subgraph.
\end{itemize}

\section{Hardness Results}\label{hard}

In this section, we consider \bcs~problem on unit disk graphs, unit square graphs, complete grid graphs, and outer-string graphs.


\subsection{Unit Disk Graphs}
In this section, we study the \bcs~problem on unit-disk graphs. 
It has been shown that this problem is \np-hard on planar graphs \cite{bhore2019balanced}. 
Besides, we know that every planar graph can be represented as a disk graph (due to Koebe's kissing disk embedding theorem \cite{dehmer2010structural}). Therefore, the \np-hardness for the \bcs~problem on disk graphs directly follows from there. Here, we show that this problem remains \np-hard even on unit-disk graphs. 

Here we give a reduction from the \colg{\textsc{Rectilinear Steiner Tree} (\rst)} problem \cite{garey1977rectilinear}.
In this problem, we are given a set $P$ of integer coordinate points in the plane and an integer $L$. The objective is to find a Steiner tree $T$ (if one exists) with length at most $L$. 

During the reduction, we first generate a geometric intersection graph from an instance $X(P,L)$ of the \rst~problem. The vertices of the geometric intersection graph are integer coordinated and each edge is of unit length. Next, we show that this geometric intersection graph is a unit-disk graph.


\noindent {\bf Reduction:}
Suppose we have an instance $X(P,L)$. For any point $p\in P$, $p(x)$ and $p(y)$ denote the $x$- and $y$-coordinates of $p$, respectively. Let $p_t $, $ p_b $, $ p_l $, and $ p_r $ be the topmost (largest $y$-coordinate), bottom-most (smallest $y$-coordinate), leftmost (smallest $x$-coordinate), and rightmost (largest $x$-coordinate) points in $ P $. We now take a unit integer rectangular grid graph $D$ on the plane such that the coordinates of the lower-left grid vertex is $(p_l(x),p_b(y))$ and upper-right grid vertex is $(p_r(x),p_t(y))$. Now we associate each point $p$ in $P$ with a grid vertex $d_p$ in $D$ having the same $x$- and $y$-coordinates of $p$. 
Now we assign colors to the points in $D$. The vertices in $D$ correspond to the points in $P$ are colored with red and the remaining grid vertices in $D$ are colored with blue. We now add some more vertices to $D$ as follows:

Observe that if there is a Steiner tree $T$ of length $L +1=|P|$ exists then $T$ does not include any blue vertex in $D$. Further, if there is a Steiner tree $T$ of length $L +1=2|P|$ exists then $T$ contains equal number of red and blue vertices in $D$. Based on this observation we consider two cases to add some more vertices (not necessarily form a grid structure) to $D$.

\begin{description}
\item[Case 1. \boldmath{[$L+1\geq 2|P|$]}]\label{ca1} In this case the number of blue vertices in a Steiner tree $T$ (if exists) is more than or equals to red vertices in $D$.  We consider a path $\delta$ of $(L-2|P|+1)$ red vertices starting and ending with vertices $r_1$ and $r_{L-2|P|+1}$, respectively. The coordinates of $r_i$ is $(p_l(x)-i,p_l(y))$, for $1\leq i\leq L-2|P|+1$.  We connect this path with $D$ using an edge between the vertices $r_1$ and $p_l$. See Figure \ref{fig:udg-nphard-1} for an illustration of this construction. Let the resulting graph be $G_1=D\cup \delta$.

\item[Case 2. \boldmath{[$ L+1 < 2|P|$]}] In this case the number of red vertices in a Steiner tree $T$ (if exists) is more than the number of blue vertices in $D$. We consider a path $\delta$ of $(2|P| -L)$ blue vertices starting and ending with vertices $b_1$ and $b_{2|P| -L}$, respectively. The coordinates of $b_i$ is $(p_l(x)-i,p_l(y))$, for $1\leq i\leq 2|P|-L$. We connect this path with $D$ using an edge between the vertices $b_1$ and $p_l$. We add one more red vertex $r'$ whose coordinates are $(p_{2|P|-L}(x)-1,p_l(y))$ and connect it with $b_{2|P|-L}$ using an edge. See Figure \ref{fig:udg-nphard-2} for an illustration of this construction. Let the resulting graph be $G_2=D\cup \delta \cup \{r'\}$
\end{description}

\vspace{-.5cm}
\begin{figure}[ht!] 
\begin{center}
\subfigure[ ]{\includegraphics[scale=.63]{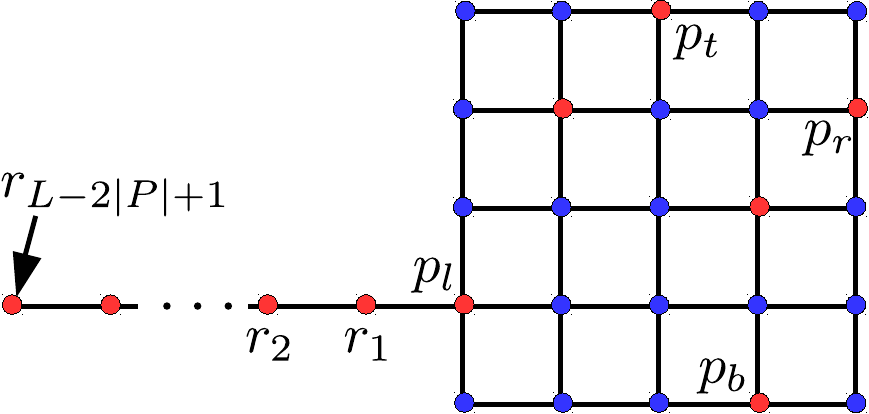}
\label{fig:udg-nphard-1}
}
\hspace{.5cm}
\subfigure[ ]{\includegraphics[scale=.63]{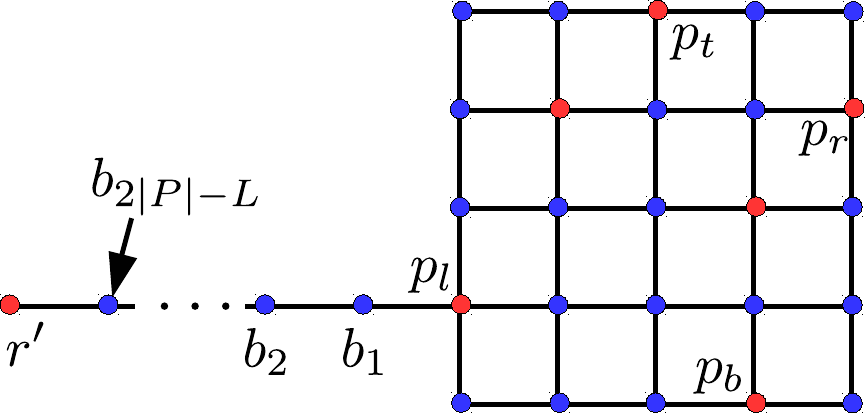}
\label{fig:udg-nphard-2}
}

\end{center}
\caption{(a) Construction of the instance $G_1$. (b) Construction of the instance $G_2$.}
\label{fig:udg-nphard}
\end{figure}
\vspace{-.5cm}

This completes the construction. Clearly, the construction (either $G_1$ or $G_2$) can be done in polynomial time. Now we prove the following lemma for Cases 1 and 2 separately.  



\begin{lemma}\label{lem-bcs-np-hard}
	The instance $X$ of the \rst~problem has a solution $ T $ if and only if 
	\begin{itemize}
	    \item {\bf For Case 1:} the instance $G_1$ has a balanced connected  subgraph $H$ with $2(L-|P|+1)$ vertices.
	    \item {\bf For Case 2:} the instance $G_2$ has a balanced connected  subgraph $T$ with  $2(|P|+1)$ vertices.
	\end{itemize}
\end{lemma}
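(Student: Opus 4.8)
The plan is to prove both implications, running Case~1 and Case~2 in parallel; in each case the whole argument is driven by a single counting observation, so I would state that first. In $G_1$ the red vertices are exactly the $|P|$ grid vertices of $D$ coming from $P$ together with the $L-2|P|+1$ vertices of $\delta$, i.e.\ exactly $L-|P|+1$ of them; in $G_2$ the red vertices are the $|P|$ grid vertices of $D$ coming from $P$ together with $r'$, i.e.\ exactly $|P|+1$ of them. Since the target sizes are $2(L-|P|+1)$ and $2(|P|+1)$ respectively, any balanced connected subgraph $H$ of the target size is forced to contain \emph{every} red vertex of the instance (and hence exactly as many blue vertices); in particular no balanced connected subgraph can be larger, so the ``exactly'' in the statement is the same as the ``at least'' that the hardness reduction ultimately needs.

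For the direction ``balanced subgraph $\Rightarrow$ Steiner tree'', assume $H$ of the target size exists. In Case~1, $\delta$ is entirely red, so $H\supseteq\delta$, and since the forced vertex $p_l$ glues $\delta$ to $D$, $H$ contains $\delta\cup\{p_l\}$. In Case~2, $\delta$ is blue but $r'$ is red and is a pendant neighbour of the endpoint $b_{2|P|-L}$ of $\delta$; since $H$ must contain $r'$ and is connected, it must then contain $b_{2|P|-L}$, then $b_{2|P|-L-1}$, and so on down to $b_1$ and $p_l$ --- this is exactly why $r'$ is added. Deleting the pendant gadget from $H$ (it meets the rest of $H$ only at the cut vertex $p_l$) leaves a connected subgraph $H':=H\cap D$ that still contains all of $P$. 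A blue-vertex count gives $|H'|=L+1$ in both cases: in Case~1 all $L-|P|+1$ blue vertices of $H$ already lie in $D$, while in Case~2 exactly $2|P|-L$ of the $|P|+1$ blue vertices of $H$ lie on $\delta$, leaving $L+1-|P|$ in $D$. Hence $H'$ is a connected subgraph of the unit grid on exactly $L+1$ vertices containing all of $P$, so any spanning tree of $H'$ is a rectilinear Steiner tree of $P$ with exactly $L$ unit-length edges; thus $X$ is a yes-instance.

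For the converse, let $X$ be a yes-instance. By Hanan's theorem we may take $T$ to be a minimum rectilinear Steiner tree of $P$ drawn on the integer grid inside the bounding box of $P$ (so $T$ is a subgraph of $D$), and its length is still at most $L$. Subdividing its segments at the intervening integer points (the new vertices are blue, or red if they coincide with a point of $P$) and then appending further unit edges leading to unused grid vertices of $D$, we may assume $T$ has exactly $L$ edges, hence $L+1$ vertices, of which the $|P|$ points of $P$ are red and the remaining $L+1-|P|$ are blue. In Case~1 take $H=T\cup\delta$: it is connected through $p_l$, has $|P|+(L-2|P|+1)=L-|P|+1$ red and $L+1-|P|$ blue vertices, which are equal, so $|H|=2(L-|P|+1)$. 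In Case~2 take $H=T\cup\delta\cup\{r'\}$: it is connected (through $p_l$, with $r'$ pendant on $\delta$), has $|P|+1$ red and $(L+1-|P|)+(2|P|-L)=|P|+1$ blue vertices, so $|H|=2(|P|+1)$. Either way $H$ is the required balanced connected subgraph.

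The routine part is the vertex counting, which becomes transparent once one observes that the gadget is built precisely so that meeting the red quota drags in all of it. The step that needs care is the ``we may assume'' in the converse: padding $T$ up to exactly $L$ edges is not optional, since a tree with fewer edges supplies too few blue vertices to balance, so one must be sure that a Steiner tree of length $\le L$ can always be redrawn with unit edges inside the bounding-box grid $D$ and then lengthened to exactly $L$ edges without leaving $D$ or revisiting a vertex. Pathological inputs (collinear point sets, or $L$ so large that $D$ leaves no room to pad) should be dismissed by a separate easy argument, or avoided by taking $D$ slightly larger than the bounding box; this is where I expect the only genuine subtlety to sit.
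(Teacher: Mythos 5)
Your proof is correct and follows essentially the same route as the paper's: the same red-vertex count forces any target-size balanced subgraph to absorb the whole gadget $\delta$ (and $r'$), and stripping the gadget leaves exactly $L+1$ grid vertices spanning $P$, while the converse glues $\delta$ onto the tree. You are in fact somewhat more careful than the paper, which tacitly assumes the Steiner tree has length exactly $L$ and lies on the integer grid inside the bounding box; your Hanan-grid embedding, the padding to exactly $L$ edges (with the caveat about instances where $L+1$ exceeds the number of grid vertices, which are trivially yes-instances), and taking a spanning tree of $H\cap D$ rather than asserting $H$ is a tree, all close small gaps in the published argument.
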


\begin{proof} We prove this lemma for Case 1. The proof of Case 2 is similar.

\noindent {\bf For Case 1:}
Assume that $X$ has a Steiner tree $T$ of length $L$, where $L+1\geq 2|P|$. Let $ U $ be the set of those  vertices in $ G_1 $ corresponds to the vertices in $T$. Clearly, $U$ contains $|P|$ red vertices and $L-|P|+1$ blue vertices. Since $L+1\geq 2|P|$, to make $U$ balanced it needs $L-|P|+1 -|P|$ more red vertices. So we can add the path $\delta$ of $L-2|P|+1$ red vertices to $U$. Therefore, $U\cup\delta$ becomes connected and balanced (contains $L-|P|+1$ vertices in each color).

On the other hand, assume that there is a balanced connected subgraph $H$ in $G$ with $(L-|P|+1)$ vertices of each color. We can observe that $H$ is a tree and no blue vertex in $H$ is a leaf vertex. The number of red vertices in $ G_1$ is exactly $(L-|P|+1)$. So the $H$ must pick all the $(L-|P|+1)$  blue vertices that connect the vertices in $G_1$ corresponding to $ P $.
We take the set $A$ of all the grid vertices corresponding to the vertices in $ H $ except the vertices $ \{r_i; 1 \leq i \leq (L-2|P|+1)\} $. We output the Steiner tree $T$ that contains the vertex set $A$ and edge set $E_A$ connecting the vertices of $A$ according to the edges in $H$. As $ |A|= 2(L-|P|+1)- (L-2|P|+1)=L+1 $, so we output a  Steiner tree of length $ L $. \qed
\end{proof}

 We now show that the geometric intersection graph either $G_1$ or $G_2$ is a unit-disk graph. Let us consider the graph $G_1$. For each vertex $v$ in $G_1$ we take a unit disk whose radius is $\frac{1}{2}$ and center is on the vertex $v$. Therefore from the Lemma \ref{lem-bcs-np-hard}, we conclude that,
 
 \begin{theorem}
The \bcs~problem is \np-hard for unit-disk graphs.
 \end{theorem}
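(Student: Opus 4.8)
The plan is to combine the polynomial-time construction above with Lemma~\ref{lem-bcs-np-hard} and the (already sketched) disk realization of the constructed graph. Since the \rst{} problem is \np-hard~\cite{garey1977rectilinear}, it suffices to check that the map sending an instance $X(P,L)$ to $G_1$ (when $L+1\ge 2|P|$) or to $G_2$ (when $L+1<2|P|$) is a polynomial-time many-one reduction to \bcs{} restricted to unit-disk graphs. The construction is clearly polynomial, and the required combinatorial equivalence --- $X$ has a Steiner tree solution iff $G_1$ (resp.\ $G_2$) has a balanced connected subgraph of size $2(L-|P|+1)$ (resp.\ $2(|P|+1)$) --- is exactly Lemma~\ref{lem-bcs-np-hard}. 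So the one thing left to establish is that $G_1$ and $G_2$ are unit-disk graphs, with a representation computable in polynomial time.

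I would build the representation as follows: every vertex $v$ of $G_1$ (resp.\ $G_2$) sits at an integer point of the plane, so place the closed disk $D_v$ of radius $\tfrac12$ centered at $v$. Then $D_u\cap D_v\ne\emptyset$ iff $\|u-v\|\le 1$. Since all centers are integer points, two distinct ones are at distance $\le 1$ iff they differ by exactly $\pm 1$ in one coordinate and agree in the other --- a ``diagonal'' pair is already at distance $\sqrt2>1$ --- i.e.\ iff $u,v$ are horizontal or vertical unit-neighbours. Hence it only remains to verify that the edge set of $G_1$ (resp.\ $G_2$) equals the set of horizontal/vertical unit-neighbour pairs among its vertices. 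Within the grid $D$ this is immediate by construction. The path $\delta$ lies on the line $y=p_l(y)$ at $x$-coordinates $p_l(x)-1,p_l(x)-2,\dots$, strictly to the left of the bounding box of $D$; so along $\delta$ the unit-neighbour pairs are exactly the consecutive ones, and the only vertex of $D$ within distance $1$ of the rightmost path vertex ($r_1$ or $b_1$) is $p_l$ --- matching the single gluing edge. The extra vertex $r'$ in $G_2$ is placed one unit step from the endpoint of $\delta$ and from nothing else. Thus the intersection graph of $\{D_v\}$ is exactly $G_1$ (resp.\ $G_2$), these graphs are unit-disk graphs, and the representation is produced in polynomial time. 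Combined with Lemma~\ref{lem-bcs-np-hard}, this proves \bcs{} is \np-hard on unit-disk graphs.

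The delicate point is the last part of the second paragraph: one must ensure that grafting the pendant path $\delta$ (and the vertex $r'$) onto $D$ introduces no \emph{spurious} intersections --- no path vertex ending up within Euclidean distance $1$ of a non-neighbouring grid vertex --- and that the genuine diagonal grid pairs (distance $\sqrt2$) remain non-adjacent. This is precisely what the radius $\tfrac12$ together with the collinear, strictly-leftward placement of $\delta$ guarantees, since every vertex of $\delta$ has $x$-coordinate strictly less than $p_l(x)$ whereas every vertex of $D$ has $x$-coordinate at least $p_l(x)$; the resulting horizontal gap forces distance $>1$ except across the one intended edge. The rest is a routine verification.
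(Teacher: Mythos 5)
Your proposal is correct and follows essentially the same route as the paper: reduce from \rst{} via the graphs $G_1$/$G_2$, invoke Lemma~\ref{lem-bcs-np-hard} for the equivalence, and realize the graph with radius-$\tfrac12$ disks centered at the integer-coordinate vertices. The only difference is that you explicitly verify the absence of spurious intersections (diagonal grid pairs at distance $\sqrt2$, and the strictly-leftward placement of $\delta$), a check the paper leaves implicit.
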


\noindent {\bf Extensions:} By a simple extension to the the above reduction we can prove that The \bcs~problem is \np-hard for the unit square graph and the complete grid graph.

\subsection{Unit Square Graphs} \label{app-sq}
 
We show that the \bcs~problem remains \np-hard for the unit square graphs. Similar to the unit disk graph, we give a reduction from the \rst~problem. The reduction (construction of the graph $G_1$ or $G_2$ and the proof similar to Lemma \ref{lem-bcs-np-hard}) is exactly same as the reduction for unit disk graphs. The only thing we show that both the graph $G_1$ and $G_2$ are intersection graph of unit squares.

\begin{figure}[h!]
	\centering
	\includegraphics[scale=0.65]{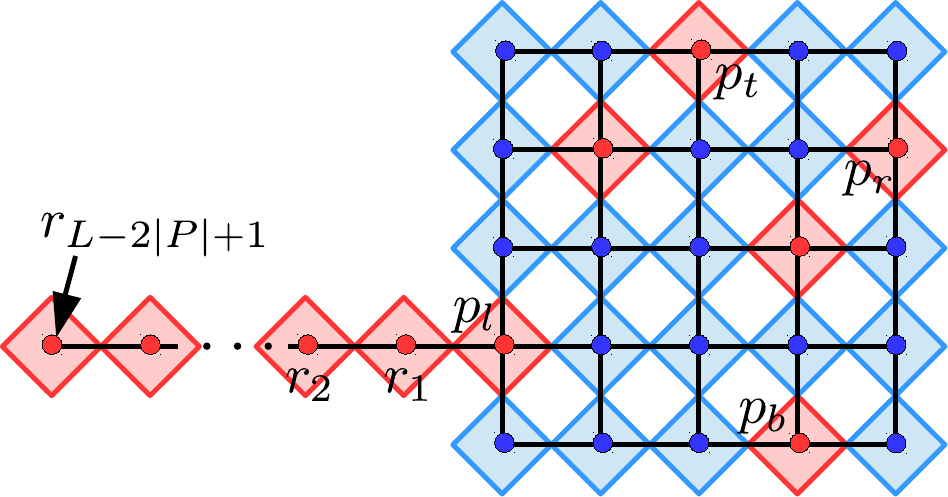}
	\caption{Construction of the instance for \bcs~problem for unit-disk graph}
	\label{fig-np-sq}
\end{figure} 

Let us consider the graph $G_1$. For each grid vertex $v$ in $G_1$, take an axis-parallel square and rotate it $45^\circ$ with the $x$-axis. The side length of this square is $\frac{1}{\sqrt{2}}$ and whose center is on $v$.  See Figure \ref{fig-np-sq} for an illustration. Finally, we rotate the complete construction by an angle of $-45^\circ$ and scale it by a factor of $\sqrt{2}$. This makes the squares axis-parallel and unit side length. It is not hard to verify that $G_1$ is an intersection graph of this set of squares. Hence we conclude:

\begin{theorem}
	The \bcs~problem is \np-hard for axis-parallel unit square graphs.
\end{theorem}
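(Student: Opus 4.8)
The plan is to recycle, essentially verbatim, the reduction from the \rst~problem used for unit-disk graphs: given an instance $X(P,L)$, build the graph $G_1$ when $L+1\geq 2|P|$ and $G_2$ otherwise, and invoke Lemma~\ref{lem-bcs-np-hard} for the equivalence between a Steiner tree of length $L$ and a balanced connected subgraph of the prescribed size. Both constructions are clearly polynomial, so the whole task collapses to a single geometric claim: \emph{$G_1$ and $G_2$ admit intersection representations by axis-parallel unit squares}. Everything else — in particular the correctness of the reduction — is inherited from the unit-disk section.

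To establish that claim I would first record the structural fact that makes the construction tick. In both $G_1$ and $G_2$ every vertex carries an integer coordinate pair, the grid part $D$ is a solid unit grid, and the appended path $\delta$ (together with the single extra vertex $r'$ in Case~2) is laid out along the horizontal line $y=p_l(y)$ strictly to the left of $D$. Consequently two vertices of $G_i$ are adjacent precisely when their coordinate vectors differ by one unit in exactly one coordinate, i.e.\ when their $\ell_1$-distance equals $1$; in other words $G_i$ is exactly the $\ell_1$-unit-distance graph on its vertex set. In particular, diagonal grid neighbours sit at $\ell_1$-distance $2$ and are \emph{non}-adjacent, which is exactly why ordinary axis-parallel squares of side $1$ centred at the grid points fail (they would make diagonal neighbours touch at a corner).

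The remedy, illustrated in Figure~\ref{fig-np-sq}, is to place at each vertex $v$ a square rotated by $45^\circ$ — equivalently an $\ell_1$-ball — of half-diagonal $\tfrac12$, i.e.\ of side length $\tfrac1{\sqrt2}$, centred at $v$. Two such diamonds meet iff the $\ell_1$-distance of their centres is at most $1$; for distinct integer points this happens iff that distance is exactly $1$, that is, iff the corresponding vertices are adjacent in $G_i$, so the representation is faithful. Finally I would apply the global similarity that rotates the entire picture by $-45^\circ$ and scales it by $\sqrt2$: it carries every diamond to an axis-parallel square of side $\tfrac1{\sqrt2}\cdot\sqrt2=1$ and, being a bijection of the plane, preserves all pairwise intersections. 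Hence $G_1$ and $G_2$ are unit-square graphs, and combining this with Lemma~\ref{lem-bcs-np-hard} yields \np-hardness of \bcs~on axis-parallel unit-square graphs.

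The one place that needs genuine care — as opposed to bookkeeping — is the faithfulness of the diamond representation: one must verify that \emph{exactly} the $\ell_1$-distance-$1$ pairs intersect and no others, which rests both on the $\ell_1$ geometry of diamonds and on the usual convention that boundary contact counts as an intersection. If interior overlap is preferred, the argument survives after inflating each diamond's half-diagonal from $\tfrac12$ to any value in the open interval $(\tfrac12,1)$ and rescaling accordingly, since diagonal neighbours (at distance $2$) still stay disjoint. I would also double-check that no vertex of $\delta$ or the vertex $r'$ in Case~2 is $\ell_1$-adjacent to any grid vertex of $D$ other than its intended neighbour, which is immediate from the coordinates chosen in the reduction.
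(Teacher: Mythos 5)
Your proposal is correct and follows essentially the same route as the paper: reuse the \rst~reduction and Lemma~\ref{lem-bcs-np-hard} verbatim, represent each integer-coordinate vertex by a $45^\circ$-rotated square (diamond) of side $\tfrac{1}{\sqrt{2}}$ centred at it, and apply a global rotation by $-45^\circ$ and scaling by $\sqrt{2}$ to obtain axis-parallel unit squares. Your explicit $\ell_1$-distance check of faithfulness (and the remark about boundary contact versus interior overlap) is a welcome elaboration of the step the paper dismisses as ``not hard to verify.''
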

 
\subsection{Complete Grid Graphs} \label{app-cgd} 

We show that the \bcs~problem is \np-hard for the complete grid graphs. 
Notice that to prove that the \bcs~problem is \np-hard for the unit-disk graph we generate an instance either $G_1$ or $G_2$. None of these graphs is a complete grid graph. Our idea is to make them complete grid graphs. Now consider the graph $G_1$. Notice that $G_1$ has two components $D$ and $\delta$. $D$ is a complete grid graph. We add $\delta$ to it to the left of $D$. We now add blue vertices at each integer coordinates  $(r,s)$, where $(r,s) \notin \delta,( p_l(x)-L +2|P|-1) \leq r < p_l(x) $ and $ p_b(y) \leq s \leq p_t(y)$. As a result $G_1$ becomes a complete grid graph. Similarly, we can make $G_2$ complete grid graph by adding only blue vertices. For this modified $G_1$ and $G_2$, we can prove Lemma \ref{lem-bcs-np-hard} using similar arguments. Hence we conclude:


\begin{theorem}
	The \bcs~problem is \np-hard for complete grid graphs.
\end{theorem}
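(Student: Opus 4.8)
The plan is to reduce from the \rst~problem exactly as in the unit-disk and unit-square cases, reusing Lemma~\ref{lem-bcs-np-hard} almost verbatim. Recall that in the unit-disk reduction we produced a geometric graph that is either $G_1 = D \cup \delta$ (Case 1, when $L+1 \geq 2|P|$) or $G_2 = D \cup \delta \cup \{r'\}$ (Case 2). The graph $D$ is already a solid rectangular integer grid, but $\delta$ (and $r'$) hang off the left side of $D$ as a thin path, so the union is not a complete grid graph. The obstacle is therefore purely structural: we must ``fill in'' the region between $\delta$ and $D$ so that the whole construction becomes a full rectangular grid, without changing the answer to the \bcs~instance.

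First I would describe how to pad $G_1$. The path $\delta$ consists of $L-2|P|+1$ red vertices placed at $y$-coordinate $p_l(y)$ at $x$-coordinates $p_l(x)-1, \dots, p_l(x)-(L-2|P|+1)$; so $\delta$ together with $D$ spans $x$ from $p_l(x)-(L-2|P|+1)$ to $p_r(x)$ and $y$ from $p_b(y)$ to $p_t(y)$, except that for $x < p_l(x)$ only the single row $y = p_l(y)$ is occupied. I would add a new \emph{blue} vertex at every integer point $(r,s)$ with $p_l(x)-(L-2|P|+1) \leq r < p_l(x)$, $p_b(y) \leq s \leq p_t(y)$, and $(r,s) \notin \delta$ (this is precisely the description already given in the excerpt). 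The result is a full rectangular grid on the bounding box, i.e.\ a complete grid graph. The same padding with blue vertices works for $G_2$, after similarly noting the bounding box induced by $D$, $\delta$, and $r'$.

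Next I would re-examine the proof of Lemma~\ref{lem-bcs-np-hard} to check that the added blue vertices do not create a smaller or spurious balanced connected subgraph. The key counting invariant in the original proof is that $G_1$ contains exactly $L-|P|+1$ red vertices in total (the $|P|$ grid vertices of $P$ plus the $L-2|P|+1$ vertices of $\delta$), so any balanced connected subgraph of size $2(L-|P|+1)$ must use \emph{every} red vertex; adding only blue vertices preserves the red count, hence preserves this invariant. Consequently the forward direction (Steiner tree $\Rightarrow$ balanced connected subgraph) is unchanged, since the subgraph $U \cup \delta$ built there is still present. For the reverse direction, a maximum balanced connected subgraph $H$ of size $2(L-|P|+1)$ still must contain all red vertices, and the blue vertices it additionally picks that lie inside $D$'s coordinate range still serve to connect the points of $P$; stripping off the $r_i$'s and reading the remaining grid vertices off as a Steiner tree of $P$ gives length $L$ exactly as before. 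The newly added blue padding vertices are simply optional and can only be used to replace other blue vertices of equal count, so they cannot yield a \emph{larger} balanced subgraph than $2(L-|P|+1)$ either, because that bound is forced by the number of red vertices.

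The main obstacle I anticipate is the second step: verifying that the completed grid does not admit a balanced connected subgraph strictly larger than $2(L-|P|+1)$, nor a smaller one that would still certify a bad Steiner tree. Both follow from the same observation --- the red-vertex count of the construction is untouched by padding, and a balanced subgraph can have at most as many blue as red vertices --- but it is worth stating explicitly because in the complete-grid version there are now \emph{many} more blue vertices available, so one must be sure none of them can be ``cashed in'' to beat the bound. Once that is pinned down, the analogue of Lemma~\ref{lem-bcs-np-hard} goes through for the modified $G_1$ and $G_2$, and since a complete grid graph is constructed in polynomial time from the \rst~instance, \np-hardness for complete grid graphs follows.
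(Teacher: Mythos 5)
Your proposal is exactly the paper's proof: pad $G_1$ (resp.\ $G_2$) with blue vertices at every unoccupied integer point of the bounding box so that the construction becomes a complete grid, and then re-run Lemma~\ref{lem-bcs-np-hard}, the key invariant being that the red-vertex count is unchanged so any balanced connected subgraph of the target size must still take every red vertex. You are in fact somewhat more explicit than the paper (which only asserts the lemma holds ``using similar arguments''); the one point both treatments leave implicit is that in the completed grid $\delta$ is no longer a pendant path (its vertices now have blue padding neighbours), so the step ``strip off the $r_i$'s and read off a Steiner tree'' still needs the observation that any connection routed through the padding region can be rerouted through $p_l$ into $D$ without increasing the number of blue vertices used.
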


\subsection{Outer-String Graphs}
We study the \bcs~problem on string graphs. We know finding a balanced subgraph on planar graphs is NP-hard. Now, every planar graph can be represented as a string graph( by drawing a string for each vertex that loops around the vertex and around the midpoint of each adjacent edge). So \np-hardness of \bcs~problem  holds for string graphs. Below we show that this problem remains \np-hard even for outer-string graphs.

\vspace{2mm}
 We give a reduction from the dominating set problem which is known to be \np~complete on general graphs \cite{kikuno1980np}. Given a graph $ G=(V,E) $, the dominating set problem asks whether there exists a set  $ U \subseteq V $ such that $|U|\leq k$ and $ N[U]=V $, where $ N[U] $ denotes neighbours of $ U $ in $ G $ including $ U $ itself.

During the reduction, we first generate a geometric intersection  graph $H=(R\cup B,E')$ from an instance $X(G,k)$ of the dominating set problem on general graph. Next, we show that $H$ is an outer-string graph.
\vspace{2mm}

\noindent {\bf Reduction:}
Let $ G=(V,E) $ be  graph with vertex set $ V=\{v_1, v_2,\dots, v_n\}$. For each vertex $ v_i \in V $ we add a red vertex $ v_i $ and a blue vertex $ v'_i $ in $ H $. For each edge $ (v_i, v_j) \in E $, we add two edges $(v_i, v'_j), (v'_i, v_j)  $ in $ E' $. Take a path of $ k $ red vertices starting at $ r_1 $ and ending at $ r_k $. Also take a path of $ n $ blue vertices starting at $ b_1 $ and ending at $ b_n$. Add the edges $ (b_n, r_k) , (b_1,v_1)$ into $ E' $.  We add edges between all pair of vertices in $\{ v'_1,v'_2, \dots v'_n\}$. Our construction ends with adding $ n $ edges $ (v_i, v'_i) $ into $ E' $ for $ 1 \leq i \leq n $.

\begin{figure}[ht!]
\centering
\subfigure[]
		{
			\includegraphics[scale=0.5]{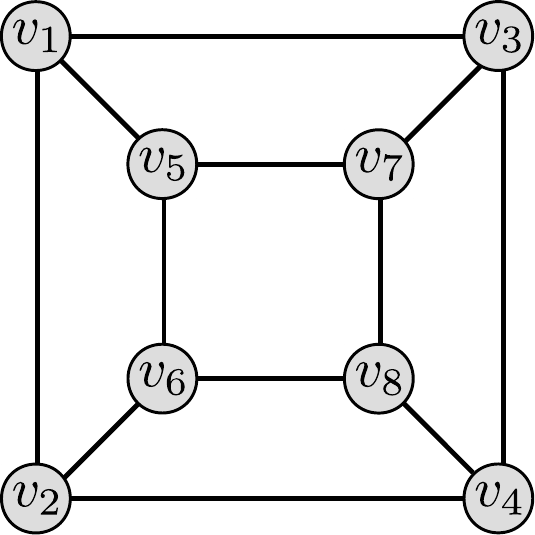}			\label{fig:subfig1}
		}
		\subfigure[]
		{
			\includegraphics[scale=0.5]{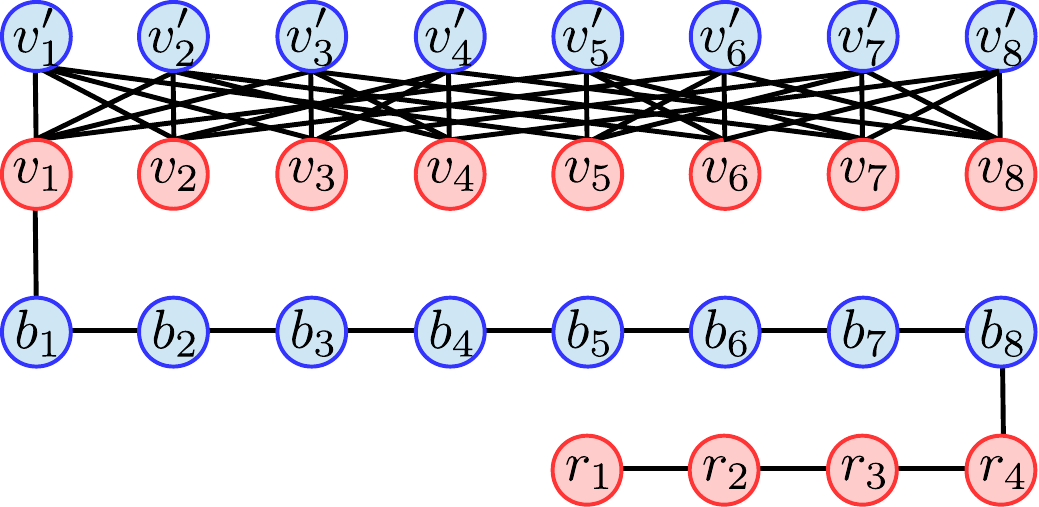}
			\label{fig:subfig11}
		}

		\subfigure[]
		{
			\includegraphics[scale=1]{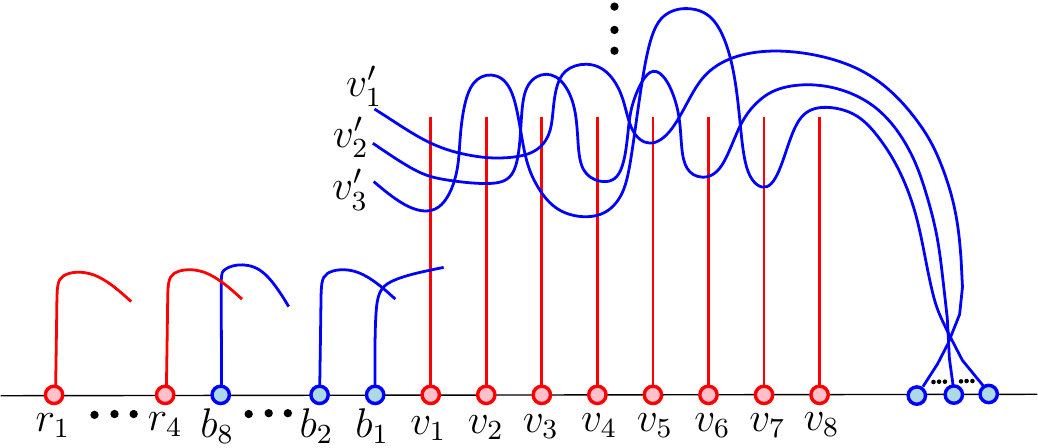}
			\label{fig:subfig12}
		}
		\caption[(Optional caption for list of figures]{(a) a graph $ G $. (b) Construction of $ H $ from $ G $ with $ k=4 $. For the clarity of the figure we omit the edges between each pair of vertices $v'_i$ and $v'_j$, for $i \neq j$. (c) Intersection model of $ H $. }
		\label{fig_out}
	\end{figure}

This completes the construction.  See Figure \ref{fig_out} for an illustration of this construction. Clearly, the construction can be made in polynomial time. Now we prove the following lemma. 
	
\begin{lemma}\label{lem-bcs-np-hard-string}
The instance $X$ has a dominating set of size $k$ if and only if $ H $ has a balanced connected subgraph $T$ with $2(n+k)$ vertices.
\end{lemma}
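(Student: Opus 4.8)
The plan is to prove the biconditional in Lemma~\ref{lem-bcs-np-hard-string} in two directions, matching a dominating set of size $k$ in $G$ with a balanced connected subgraph on exactly $2(n+k)$ vertices in $H$. First I would set up the bookkeeping: $H$ contains $2n$ ``main'' vertices (the red $v_i$'s and the blue $v'_i$'s), a path of $k$ red vertices $r_1,\dots,r_k$, and a path of $n$ blue vertices $b_1,\dots,b_n$, attached to the main gadget via the edges $(b_n,r_k)$ and $(b_1,v_1)$; note that the $v'_i$'s form a clique, each $v'_i$ is adjacent to its partner $v_i$ and to $v_j$ for every neighbour $v_j$ of $v_i$ in $G$. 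A balanced subgraph on $2(n+k)$ vertices must contain $n+k$ red and $n+k$ blue vertices; since the total number of red vertices in $H$ is exactly $n+k$ (namely the $v_i$'s together with $r_1,\dots,r_k$), any such subgraph $T$ is forced to use \emph{all} red vertices, and in particular all of $v_1,\dots,v_n$ and the full red path $r_1,\dots,r_k$. This rigidity is the crux of the argument.

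For the forward direction, suppose $G$ has a dominating set $U=\{v_{i_1},\dots,v_{i_k}\}$. I would build $T$ by taking all $n$ red vertices $v_1,\dots,v_n$, all $n$ blue vertices $v'_1,\dots,v'_n$, the $k$ blue vertices $v'_{i_1},\dots,v'_{i_k}$ wait --- those are already counted; instead the extra $k$ blue vertices come from the $b$-path or from choosing which $v'$-vertices to include. Let me restate: $T$ should consist of all $v_i$ ($n$ red), all $r_1,\dots,r_k$ ($k$ red), for a total of $n+k$ red; and on the blue side, the $k$ vertices $v'_{i_1},\dots,v'_{i_k}$ corresponding to $U$, plus all $n$ path vertices $b_1,\dots,b_n$, for a total of $n+k$ blue. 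Connectivity: the $b$-path is connected and hangs off $r_k$ (via $(b_n,r_k)$) and off $v_1$ (via $(b_1,v_1)$), so the $r$-path, the $b$-path, and $v_1$ are all in one component; every other $v_j$ with $j\neq 1$ is dominated by some $v_{i_\ell}\in U$, hence adjacent to $v'_{i_\ell}$, which in turn is adjacent to $v_{i_\ell}$ and (through the $v'$-clique, but we only kept $k$ of them --- still they form a clique among themselves) to the other chosen $v'$'s; and each chosen $v'_{i_\ell}$ is adjacent to $v_{i_\ell}$ which connects back to the $v_1$/path component provided $v_{i_\ell}$ is reachable --- here I would use that $v'_{i_\ell}$ is adjacent to $v_j$ for all neighbours $v_j$, and $v_1$ is in the component, so as long as the dominating set plus adjacencies tie everything together one verifies connectivity directly. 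I would spell out that every red $v_j$ is adjacent (in $H$) to a chosen $v'_{i_\ell}$ because $v_j\in N[v_{i_\ell}]$, and the chosen $v'$'s plus $v_1$ plus the two paths form a connected backbone.

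For the reverse direction, suppose $T$ is a balanced connected subgraph on $2(n+k)$ vertices. By the counting argument above, $T$ contains all $n+k$ red vertices, hence all $v_j$ and all $r_i$. It therefore contains exactly $n+k$ blue vertices chosen from the $n+k$ candidates $\{v'_1,\dots,v'_n\}\cup\{b_1,\dots,b_n\}$ --- wait, that's $2n$ candidates, and we pick $n+k$ of them, so $n-k$ blue vertices are \emph{omitted}. I would argue that the $b$-path must be fully included: the only way to connect the red path $r_1,\dots,r_k$ to the rest of $H$ is through $b_n$ (edge $(b_n,r_k)$), and since $b_1,\dots,b_n$ is an induced path, keeping $r_k$ connected forces all of $b_1,\dots,b_n$ into $T$ (any omitted $b_i$ disconnects $r_k$ from $v_1$). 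That uses up $n$ of the $n+k$ blue slots, leaving exactly $k$ blue vertices among $\{v'_1,\dots,v'_n\}$; call the corresponding index set $U$, $|U|=k$. Now connectivity of $T$ forces $U$ to dominate $G$: take any $v_j\notin U$; in $H$, the red vertex $v_j$ is adjacent only to $v'_j$ (its partner), to the $v'_i$'s with $v_i\sim_G v_j$, and possibly to $b_1$ if $j=1$. Since $v'_j\notin T$ (as $j\notin U$), the vertex $v_j$ must be joined to the rest of $T$ either through $b_1$ (only if $j=1$) or through some $v'_i\in T$, which requires $v_i\in U$ and $v_i\sim_G v_j$, i.e.\ $v_j\in N_G(v_i)\subseteq N_G[U]$. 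Handling $j=1$ separately (it is dominated by itself if $1\in U$, or by whatever neighbour's $v'$ connects it), I conclude $N_G[U]=V$, so $U$ is a dominating set of size $k$.

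The main obstacle is the reverse direction's connectivity-to-domination step, and in particular nailing down the adjacency structure of each red main vertex $v_j$ in $H$: one must be careful that the edges $(v_i,v'_j)$ and $(v'_i,v_j)$ added for each $G$-edge, together with the partner edges $(v_i,v'_i)$ and the $v'$-clique, are the \emph{only} edges incident to $v_j$, so that excluding $v'_j$ really does force a dominating neighbour into $U$. The argument that the entire blue $b$-path is forced (hence exactly $k$ of the $v'$'s remain) is the linchpin that converts the cardinality constraint $2(n+k)$ into the size-$k$ constraint on the dominating set; I would present that as a separate claim before the main case analysis.
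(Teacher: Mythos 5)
Your proof is correct and follows essentially the same route as the paper: the counting argument forcing all $n+k$ red vertices into $T$, the forced inclusion of the entire blue $b$-path to keep the $r$-path connected, and the resulting identification of exactly $k$ chosen $v'_i$'s whose indices must dominate $G$ by connectivity. You are in fact more explicit than the paper on the connectivity-to-domination step; the only loose end is your handling of $j=1$, where the clean argument is that the part $\{v_2,\dots,v_n\}\cup\{v'_i : i\in U\}$ can reach the path component only through an edge $v'_iv_1$ with $v_i\in N_G[v_1]$, which forces $v_1\in N_G[U]$ as well.
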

	
\begin{proof}
Assume that $G $ has a dominating set $ U $ of size $ k $.  Now we take the subgraph $ T $ of $ H $ induced by $\{v'_i \colon v_i \in U  \} $ along with the vertices $\{r_i \colon 1\leq i \leq k \} \cup \{b_j \colon 1 \leq j \leq n\} \cup \{v_i\colon 1 \leq j \leq n\}$ in $ H $. Now clearly $ H $ is connected and balanced with $ 2(n+k) $ vertices.
		
On the other hand, assume that there is a balanced connected tree $T$ in $H$ with $(n+k)$ vertices of each color. The number of red vertices in $ H$ is exactly $(n+k)$. So the solution must pick the  blue vertices $ \{b_i \colon 1 \leq i \leq n \} $ that connect $v_1 $ with $ r_1 $. As $ T $ has exactly $ (n+k) $ blue vertices then it $ T $ should pick exactly $ k $ vertices from the set $ \{v'_i \colon 1 \leq i \leq n \} $. The set of vertices in $V$ corresponding to these $k$ vertices gives us a dominating set of size $ k $ in $ G $. \qed		
\end{proof}

We now verify that $ H $ is an outer-string graph. For this, we provide an intersection model of it consisting of curves that lie in a common half-plane.  For an illustration see Figure \ref{fig:subfig12}. We draw a horizontal line $ y=0 $. For each vertex $ v_i \in H $, draw the line segment $L_i= \overline{(i,0)(i,1)} $. For each vertex $ v'_i \in H $, we draw a curve $ C_i $, having one endpoint on the line $y=0$, such a way that it touches only the lines $ L_i \cup L_j $ where $ (v_i, v_j) \in E $. Now we can easily add the curves corresponding to    $\{r_i \colon 1\leq i \leq k \} \cup \{b_j \colon 1 \leq j \leq n\}$ having one endpoint on the line $ y=0 $ with satisfying the adjacency.

Finally, using Lemma \ref{lem-bcs-np-hard-string}, we conclude that,

\begin{theorem}
The \bcs~problem is \np-hard for the outer-string graphs.
\end{theorem}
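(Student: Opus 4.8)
The plan is to establish the theorem by verifying the two directions of Lemma~\ref{lem-bcs-np-hard-string} and then checking that the constructed graph $H$ is indeed an outer-string graph, so that the \np-completeness of \textsc{Dominating Set} transfers to \bcs. Since the reduction is polynomial-time (each gadget adds $O(n+k)$ vertices and $O(n^2)$ edges), the theorem follows once correctness and the outer-string property are both in hand.

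First I would prove the forward direction of the equivalence: if $G$ has a dominating set $U$ with $|U| = k$, then the induced subgraph $T$ of $H$ on $\{v'_i : v_i \in U\} \cup \{r_1,\dots,r_k\} \cup \{b_1,\dots,b_n\} \cup \{v_1,\dots,v_n\}$ is connected and balanced. Connectivity is the point that needs care: the $b_j$-path reaches $v_1$, the $v_i$-vertices must each be reached, and this is exactly where the domination property is used --- every $v_i$ has some $v_j \in U$ with $(v_i,v_j)\in E$, hence the edge $(v_i, v'_j)$ of $E'$ connects $v_i$ to the clique $\{v'_\ell : v_\ell \in U\}$; and the $r$-path hangs off $b_n$. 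Counting: the red side is $k$ (from $r_i$) $+\, n$ (from $v_i$), the blue side is $n$ (from $b_j$) $+\, k$ (from $v'_i$), giving $2(n+k)$ total with a perfect balance.

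Next I would prove the reverse direction. Suppose $H$ has a balanced connected induced subgraph $T$ with $n+k$ red and $n+k$ blue vertices. The total number of red vertices in $H$ is exactly $(n + k)$ --- the $n$ vertices $v_i$ plus the $k$ path vertices $r_i$ --- so $T$ contains \emph{all} red vertices. To connect $r_1$ (and the rest of the $r$-path) to the $v_i$'s, the only route in $H$ is through the blue path $b_1,\dots,b_n$, so $T$ contains all $n$ of the $b_j$. That leaves exactly $k$ blue vertices in $T$ drawn from $\{v'_1,\dots,v'_n\}$. Since every red $v_i$ must be attached to the rest of $T$, and its only $E'$-neighbours are $v'_i$ and the $v'_j$ with $(v_i,v_j)\in E$, the chosen set $U = \{v_j : v'_j \in T\}$ must dominate every $v_i$; as $|U| = k$, this is the desired dominating set. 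I would also note the $(v_i,v'_i)$ edges are harmless for this argument --- they only add alternative attachments, but the counting argument forces the $k$-bound regardless.

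Finally I would confirm the outer-string realization sketched after the lemma: place the segments $L_i = \overline{(i,0)(i,1)}$ on the half-plane $y \ge 0$, and for each $v'_i$ route a curve $C_i$ from a point on $y=0$ that touches exactly $L_i$ and the $L_j$ with $(v_i,v_j)\in E$; such a curve exists because the $L_i$ are disjoint parallel segments, so one can snake past them in any prescribed subset. The clique among the $v'_i$ is obtained by letting all the $C_i$ share a common anchoring region near $y=0$ where they pairwise cross. The two pendant paths ($r$'s and $b$'s) are trivially realizable as nested curves with endpoints on $y=0$. The main obstacle is not any single step but making the curve-routing argument airtight --- in particular arguing that one can simultaneously realize \emph{all} $n$ curves $C_i$ with the correct touch-sets \emph{and} the complete pairwise crossings among them, without creating unwanted intersections; this is a planarity-of-the-drawing bookkeeping issue that the figure makes intuitive but which deserves an explicit ordering argument.
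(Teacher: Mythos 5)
Your proposal follows the paper's proof essentially verbatim: the same reduction from the dominating set problem, the same counting lemma establishing the $2(n+k)$ equivalence in both directions, and the same outer-string realization with vertical segments $L_i$ and curves $C_i$ anchored on $y=0$ (including the same acknowledged bookkeeping needed to realize the clique on the $C_i$'s). The only small imprecision is in your reverse direction: $v_1$ also has $b_1$ as a neighbour, so its domination is forced not by the local attachment argument you give for the other $v_i$'s but by the fact that the $v'$-clique component can reach the $b$-path only through an edge $(v_1,v'_j)$ --- a detail the paper likewise leaves implicit.
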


\section{Algorithmic Results}\label{algo}
In this section, we consider \bcs~problem on interval graphs, circular-arc graphs, and permutation graphs.
\subsection{Interval Graphs}
In this section, we study the \mmbcs~problem on connected interval graphs. A graph $G=(V,E)$ is called an interval graph if each vertex $u \in V$ is associated with an interval ${I_u= [l_u,r_u]}$ (where $l_u$ and $r_u$ denote the left and right endpoint of $I_u$, respectively), and for any pair of vertices $u,v\in V$, $(u,v) \in  E(G)$ if and only if ${I_u } \cap {I_v } \ne \phi$. Given an $n$ vertex interval graph $G=(V,E)$, we order the vertices of $G$ based on the left endpoints of their corresponding intervals. Consider a pair of vertices $u,v\in V$ with $l_u \le l_v$, we define a set $S_{u,v}= \{w \colon w \in V, l_u \leq l_w <r_w \leq r_v\} \cup \{u,v\}$. We also consider the case when $u=v$, and define $S_{u,u}$ in a similar fashion. Let $H$ be a subgraph of $G$ induced by $S_{u,v}$ (resp. $S_{u,u}$) in $G$. For any $u,v\in V$, consider $S_{u,v}$ and let $r$ and $b$ be the number of red and blue vertices in $S_{u,v}$, respectively. Without loss of generality, we may assume that $b\le r$. The goal is to find a \bcs~in $ H $ with cardinality $ |V(H)|=2b $ and $ u,v \in V(H) $ (if exists). Let $B$ be the set of all blue intervals and $T=B\cup \{u,v\}$ (resp. $T=B\cup \{u\}$, if $u=v$).

First, we compute  a connected subgraph that contains $T$ and some extra red intervals. In order to do so, we require an algorithm for Steiner tree problem on interval graphs. 

\paragraph{Steiner Tree on Interval Graphs:} 
Given an simple connected interval graph $G=(V,E)$ and the terminal set $T\subseteq V$, the minimum Steiner tree problem seeks a smallest tree that spans over $T$. 
The remaining vertices $S=V\setminus T$ are denoted as Steiner vertices. We describe a simple greedy algorithm that computes minimum Steiner tree on $G$.

We first break the terminal set $T$ into $m$ (for some $m\in [n]$) components $\{C_1,\ldots,C_m\}$
sorted from left to right based on the right endpoints.
We assume that $\{I_{C_1},\ldots,I_{C_m}\}$ be the rightmost intervals of these components. 
We consider the first component $C_1$ and the neighborhood set of $I_{C_1}$, i.e., $N(I_{C_1})$.
Let $I_j$ be the rightmost interval in $N(I_{C_1})$. By rightmost we mean the interval having rightmost endpoint. We add $I_j$ in the solution $D$.
Now, we recompute the components based on $T\cup D$. Note, $C_1\cup I_j$ is now contained in one component.  We repeat this produce until $T\cup D$ becomes one single component. 
The pseudo-code of this procedure is given below.  

\vspace{-.3cm}
\floatname{algorithm}{Algorithm}
\begin{algorithm}[ht]
\caption{$Select \_ Steiners(G=(V,E),T)$ } \label{algo:St-int}
\begin{algorithmic}[1]
\STATE $S\leftarrow V\setminus T$
\STATE $D\leftarrow 0$

\STATE $C\leftarrow$ set of components induced by $T \cup D$

\STATE Let $I_{C_1}, \ldots, I_{C_m}$ be the left to right ordered set of the right endpoints of the components. Where $C=\{C_1, C_2, \ldots C_m\}$ be a set of $m$ components (for some $m \geq 1$).

\STATE $D\leftarrow D\cup {I_j}$
\\where $I_j \in S$ and the rightmost interval in $N(I_{C_1})$.
\STATE go to step 3.
\STATE Repeat until $C$ consists of one component.

\RETURN $D$
\end{algorithmic}
\end{algorithm}

\vspace{-.5cm}

\paragraph{Correctness:} We prove the correctness of Algorithm~\ref{algo:St-int} in two steps. First, we show that the algorithm returns a solution set $D\in V\setminus T$ such that the graph induced by $T\cup D$ is connected. Then, we prove that the solution $D$ is optimum. First we show that Algorithm~\ref{algo:St-int} produce a component  connecting $T$. Assume that the algorithm has not produced a connected component containing all the intervals of $T$, it means that we did not reach the final step of the algorithm (see line 7 in Algorithm~\ref{algo:St-int}). So, there must exists an interval $I_i \notin C_m$ (for $i\in [m]$) such that no interval in $G$ cross the right endpoint of $I_i$, otherwise, we would have picked an interval from its neighborhood in the algorithm. It means $G$ is not a connected graph. However, we began with a connected interval graph. Thus, our assumption is wrong, and the graph induced by $T\cup D$ is connected.


Now we prove the optimality of the Algorithm~\ref{algo:St-int} by induction. The base case is that we have to connect the first two components $C_1$ and $C_2$. We choose the rightmost interval from the neighborhood of $I_{C_1}$. Note, that we have to connect $C_1$ and therefore it is inevitable that we have to pick an interval from $N(I_{C_1})$. We choose
the rightmost interval (say, $(I_\ell)$).  
Now, if this choice already connects $C_2$, we are done. Otherwise, $C_1=C_1\cup I_{\ell}$. Now, let us assume that we have obtained an optimal solution until step~$i$. At step~$i+1$, we have to connect the first two components. By applying the same argument as the base case, we choose the rightmost interval from the first component and proceed. 
This completes the proof. 

\paragraph{Time Complexity:} Our algorithm runs in $\mathcal{O}(n^2)$ time. 
First, in Step~$1$ we break the terminal set $T$ into a set of $m$ (for some $m\in [n]$) components, which takes linear time on $n$. Now, in constant time we can choose the rightmost interval $I_\ell$ from $N(I_{C_1})$.
If $I_\ell$ connects other components (say until $C_j$, for $j\in [m]$) then $C_1=\bigcup_{i=1}^{j} C_i$. Otherwise, $C_1=C_1\cup I_\ell$.
This process takes linear time as well. Therefore, in total our algorithm returns a Steiner in quadratic time. 

Now, we go back to the \bcs~problem.
Recall that, for any pair of intervals $u,v\in V$, our objective is to compute a \bcs~with vertex set $T$ of cardinality $2b$ (if exists), where $b$ and $r$ is the number of red and blue intervals in $S_{u,v}$, respectively, and $b\leq r$. Let $H$ be the  subgraph of $G$ induced by $S_{u,v}$.
Let $R$ and $B$ denote the set of red and blue intervals in $S_{u,v}$, respectively. 
We describe this process in Algorithm~\ref{algo:bcs-int}.
We repeat this procedure for every pair of intervals and report the solution set with the maximum number of intervals. 
\vspace{-.3cm}

\floatname{algorithm}{Algorithm}
\begin{algorithm}[ht]
\caption{$BCS\_Interval(H)$} \label{algo:bcs-int}
\begin{algorithmic}[1]
\STATE $T\leftarrow B \cup \{u,v\}$
\STATE $D=Select \_ Steiners(H,T)$
\STATE $r'\leftarrow $ number of red vertices in $ D \cup \{u,v\} $
\STATE $b'\leftarrow $ number of blue vertices in $T $
\IF{$r'>b'$}
\STATE Return $ \phi $ 
\ENDIF
\IF{$ r'=b' $}
\STATE Return $ G[D \cup T] $
\ENDIF
\IF{$ r'<b' $}
\STATE Return $ G[D \cup T \cup X] $ \\where $ X \subset V(H)$ is the set of $ (b'-r') $ red vertices with $ X \cap (D\cup T)= \phi$.
\ENDIF
\end{algorithmic}
\end{algorithm}
\vspace{-.5cm}
\paragraph{Correctness:} 
We prove that our algorithm yields an optimum solution.
Let $G'$ be such a solution. 
Let $u$ and $v$ be the intervals with leftmost endpoint and rightmost endpoint of $G'$, respectively.
Now we show  that $V(G')= \min \{2r,2b\}$, where $r$ and $b$ be the number of red and blue color vertices is $S_{u,v}$, respectively. 
Let us assume $V(G') \neq \min \{2r,2b\}$. Then there exists at least one blue interval $z$ and one red interval $z'$ that belong to $S_{u,v} \setminus V(G')$. 
However, we know that $S_{u,v}$ induces an intersection graph of intervals corresponding to the vertices $\{w \colon w \in V, l_u \leq l_w <r_w \leq r_v\} \cup \{u,v\}$, and $G'$ contain both $u$ and $v$. So, $N[z] \cap G' \neq \phi$, $N[z'] \cap G' \neq \phi$. Therefore $V(G')\cup \{z,z'\}$ induces a balanced connected subgraph in $G$. It contradicts our assumption. Thereby, we conclude the proof. 

\paragraph{Time Complexity:} We basically use the select Steiner algorithm (Algorithm~\ref{algo:St-int}) as a subroutine that we call for every pair of intervals. Moreover the graph can be stored in a range tree and, for any pair of vertices $u$ and $v$, the set $S_{u,v}$ can be obtained in $\mathcal{O}(\log n)$  time. We have shown that the computing Steiner takes $\mathcal{O}(n^2\log n)$ time, and thus our algorithm computes a \bcs~in $\mathcal{O}(n^4\log n)$ time.

\begin{theorem}
Let $G$ be an interval graph whose $n$ vertices are colored either red or blue.
    Then \bcs~problem on $G$ can be solved in $\mathcal{O}(n^4 \log n)$ time.
\end{theorem}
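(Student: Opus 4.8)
The plan is to establish the theorem by reducing the \bcs~problem on an interval graph $G$ to $O(n^2)$ invocations of the Steiner tree subroutine on induced subgraphs, and to argue correctness via a structural observation about optimal balanced connected subgraphs. First I would fix an optimal balanced connected subgraph $G'$ of $G$ and let $u$ be the vertex whose interval has the leftmost left endpoint and $v$ the vertex whose interval has the rightmost right endpoint among $V(G')$. The key claim is that $V(G')$ can be taken, without loss of generality, to consist of \emph{all} blue intervals contained in the ``strip'' $S_{u,v}=\{w : l_u \le l_w < r_w \le r_v\}\cup\{u,v\}$ together with enough additional intervals to form a connected, color-balanced set; this is exactly the observation proved in the Correctness paragraph for Algorithm~\ref{algo:bcs-int}. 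Since every interval of $S_{u,v}$ has its left endpoint to the right of $l_u$ and its right endpoint to the left of $r_v$, and since $G'$ is connected and contains $u,v$, any interval $z\in S_{u,v}\setminus V(G')$ has a neighbor in $G'$, so it can be added to $G'$; adding one spare blue and one spare red at a time keeps the subgraph connected and balanced, contradicting optimality unless $|V(G')|=2\min\{r,b\}$ where $r,b$ count red/blue intervals in $S_{u,v}$.

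Next I would explain how Algorithm~\ref{algo:bcs-int} realizes this optimum for a fixed pair $(u,v)$. Set $T=B\cup\{u,v\}$ where $B$ is the set of blue intervals in $S_{u,v}$, and call $Select\_Steiners$ on the induced subgraph $H=G[S_{u,v}]$ with terminal set $T$; by the correctness of Algorithm~\ref{algo:St-int} (already proved) this returns a minimum-size set $D$ of red Steiner intervals making $T\cup D$ connected. Counting red vertices $r'$ in $D\cup\{u,v\}$ against blue vertices $b'=|B|$, there are three cases: if $r'>b'$ no balanced solution with this $T$ beats what we need (return $\emptyset$); if $r'=b'$ the set $T\cup D$ is already balanced and connected; if $r'<b'$ we add $b'-r'$ further red intervals from $S_{u,v}$ disjoint from $D\cup T$ (these exist since $b\le r$ guarantees enough reds, and any red in the strip attaches to the connected set), yielding a balanced connected subgraph of size $2b'=2b$. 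Taking the maximum over all pairs $(u,v)$ then returns a genuine maximum balanced connected subgraph, by the structural claim above.

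For the running time I would argue as follows. The graph is preprocessed into a range tree so that for any pair $(u,v)$ the strip $S_{u,v}$, and hence $H$, its red/blue counts, and the needed padding set $X$, can be extracted in $O(\log n)$ time. For each of the $O(n^2)$ pairs we run $Select\_Steiners$, which we have already shown runs in $O(n^2\log n)$ time when operating over the range-tree representation (the extra $\log n$ over the plain $O(n^2)$ coming from neighborhood/rightmost-interval queries). Multiplying, the total is $O(n^4\log n)$, which is the claimed bound.

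The main obstacle I anticipate is the structural claim that an optimal solution may be assumed to take \emph{all} blue intervals of $S_{u,v}$: one must be careful that when $r'<b'$ the padding reds can always be chosen disjointly and so as to preserve connectivity, and that when $r'>b'$ discarding the pair $(u,v)$ loses nothing because the true optimum corresponding to those extreme intervals would then force $|V(G')|<2b$ and hence be dominated by some other pair. Handling the degenerate case $u=v$ (a single-interval strip, with $S_{u,u}$ and $T=B\cup\{u\}$) and verifying that the candidate $(u,v)$ we guess for an optimal $G'$ indeed has $b\le r$ (or symmetrically $r\le b$, handled by the same algorithm with colors swapped) are the remaining bookkeeping points, but these are routine once the exchange argument is in place.
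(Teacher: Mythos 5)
Your proposal is correct and follows essentially the same route as the paper: enumerate all candidate extreme pairs $(u,v)$, reduce each strip $S_{u,v}$ to a Steiner tree computation with the blue intervals plus $u,v$ as terminals, pad with spare red intervals, and justify optimality of $2\min\{r,b\}$ by the same augmentation/exchange argument, with the identical $\mathcal{O}(n^2)\times\mathcal{O}(n^2\log n)$ time accounting. Your only additions are the more explicit treatment of the $r'>b'$ and $u=v$ edge cases, which the paper handles implicitly.
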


	



\old{
 Below in \ref{algo_1.1}, we describe an algorithm to find maximum balanced connected subgraph in an interval graph. 
 
 \IncMargin{1em}
 \begin{algorithm}[H]
 	\SetKwData{Left}{left}\SetKwData{This}{this}\SetKwData{Up}{up}
 	\SetKwFunction{Union}{Union}\SetKwFunction{FindCompress}{FindCompress}
 	\SetKwInOut{Input}{Input}\SetKwInOut{Output}{Output}
 	\Input{A bicolored interval graph $ G=(V,E)$ with $ V=\{v_1, v_2, \dots v_n\} $.}
 	\Output{Maximum sized \bcs~in $G$. }
 	\BlankLine
 	Find an interval representation $ I_G $ of $ G $.\\
 	Sort the intervals of $ I_G $ according to the left endpoints. We assume that, this ordering is $I_1 < I_2< \dots < I_n$, where for each $ i, $ $ I_i $ corresponds the vertex $ v_i $,  $1 \leq i \leq n$.
 	
 	\For{$i\leftarrow 1$ \KwTo $n$}
 	{
 		\For{$j\leftarrow (i+1)$ \KwTo $n$}{
 			$ a= l(I_i) $ \tcp*{$ l(I_i) $ denotes the left endpoint of $ I_i $}
 			$ c= $ max $\{r(I_i),r( I_j)\}$ \tcp*{$ r(I_i) $ and $r(I_j)$ denote the right endpoint of $ I_i $ and $ I_j $, respectively.} 
 			$ v_c $= the vertex having right endpoint at $ c $.
 		$ S_{i,j}=\{w \colon w \in V, a \leq l(I_w) <r(I_w) \leq b \} $\tcp*{ interval $ I_w$ corresponds the vertex $ w $} 
 		$ H_{i,j}= G[S_{i,j}] $ \tcp*{ For $ U \subseteq V $, $ G[U] $ denotes the subgraph of $ G $ induced by $ U $} 
 		\If{$H_{i,j}$ is connected}{$ T_{i,j} =$ output of Compute$\_$\bcs$\_$interval$\_$graph$ (H_{i,j},v_i,v_c) $}
 		\Else{$ T_{i,j} = \phi$}
 	}
 	}
 Return maximum sized $ T_{i,j}, 1 \leq i < j \leq n$
 	\caption{ Compute$\_$\bcs$\_$interval$\_$graph$ (G) $}\label{algo_1.1}
 \end{algorithm}\DecMargin{1em}	

\IncMargin{1em}
\begin{algorithm}[H]
	\SetKwData{Left}{left}\SetKwData{This}{this}\SetKwData{Up}{up}
	\SetKwFunction{Union}{Union}\SetKwFunction{FindCompress}{FindCompress}
	\SetKwInOut{Input}{Input}\SetKwInOut{Output}{Output}
	\Input{(i) A connected bicolored interval graph $ H=(V,E)$ with an ordered set of $ n $ vertices. Assume that $ r $ and $ b $ are the number of red and blue vertices respectively, in $ V $ with $ r \leq b $ and \\ (ii) two vertices $ u,v \in V $}. 
	\Output{A \bcs~in $ H $ with size $ 2r $ containing $ u $ and $ v $ (if exists).}
	\BlankLine
	$ K $= set of all red vertices in $ V $.\\
	$ T= K \cup \{u,v\} $\\
	$T' $=output of Compute$\_$Steiner$\_$tree$\_$interval$\_$graph$ (H, T) $.\\
	$ b'= $ number of blue vertices in $ T' $\\
	
	\If{$ b'>r $}{return $ \phi $}
	\If{$ b'=r $}{return $T' $}
	\If{$ b'<r $}{return $ G[V(T') \cup X] $, where $ X \subset V$ is the set of $ (r-b') $ blue vertices with $ X \cap V(T')= \phi$.  \tcp*{ $ X $ always exists, as $ r<b $} }

	\caption{ Compute$\_$\bcs$\_$interval$\_$graph$ (H,u,v) $}\label{algo_1.2}
\end{algorithm}\DecMargin{1em}	

}

\subsection{Circular-Arc Graphs}
We study the \bcs~problem on circular-arc graphs. 
We are given a bicolored the circular arc graph $G=(V_R\cup V_B, E)$, where the set $V_R$ and $V_B$ contains a set of red and blue arcs, respectively. With out loss of generality we assume that the given input arcs fully cover the circle. Otherwise it  becomes an interval graph and we use the algorithm of the interval graph to get an optimal solution. 

Let us assume that $H$ be a resulting maximum balanced connected subgraph  of $G$, and
let $S$ denote the set of vertices in $H$. 
Since $H$ is a connected subgraph of $G$, $H$ covers the circle either partially or entirely. 
We propose an algorithm that computes a maximum size balanced connected subgraph $H$ of $G$ in polynomial time. Without loss of generality we assume that $V_B\leq V_R$. For any arc $u\in V$, let $l(u)$ and $r(u)$ denote the two endpoints of $u$ in the clockwise order of the endpoints of the arc. To design the algorithm, we shall concentrate on the following two cases -- 
\textbf{Case~A} and \textbf{Case~B}. 
In Case~A, we check all possible instances while the the output set does not cover the circle fully. Then, in Case~B, we handle all those instances while the output covers the entire circle. Later, we prove that the optimum solution lies in one of these instances. The objective is to reach the optimum solution by exploiting these instances exhaustively. 
\vspace{.2cm}

\noindent \textbf{Case A: $H$ covers the circle partially:} In this case, there must be a clique of arcs $K$ ($|K|\ge 1$) that is not present in the optimal solution. 
We consider any pair of arcs $u,v\in V$ such that $r(u)\prec l(v)$ in the clockwise order of the endpoints of the arcs, and consider the vertex set 
$S_{u,v}= \{w \colon w \in V, l_v \leq l_w <r_w \leq r_u\} \cup \{u,v\}$. Then, we use the Algorithm~\ref{algo:bcs-int} to compute maximum \bcs~on $G[S_{u,v}]$. This process is repeated for each such pair of arcs, and report the \bcs~with maximum number of arcs.
    
\vspace{.2cm}

\noindent\textbf{Case B: $H$ covers the circle entirely:}  In this case, $S$ must contains $2|V_B|$ number of arcs and in fact that is the maximum number of arcs $S$ can opt. 
In order to compute such a set $S$, first we add the vertices in $V_B$ to $S$, then consider the vertices in $V_B$ as a set $T$ of terminal arcs and we need to find a minimum number of red arcs $D\in V_R$ to span over $T$. We further distinguish between two cases. 

\begin{description}
   \item[B.1. $T\cup D$ covers the circle partially] 
   Clearly, this case is similar to Case~A without some extra red arcs that would be added afterwards to  ensure that $S$ contains $ 2|V_B|$ arcs. Similar to Case~A, we again try all possible subsets obtained by pair of vertices $u,v$ with $r(u)\prec l(v)$ and $S_{u,v}$ contains all blue vertices and we find optimal Steiner tree by using Procedure~\ref{algo:St-int}. Then, we add $(|V_R|-|D|)$ (where $D$ is the set of Steiner arcs) many red arcs from $V_R$ to $S$.      
  
    \item[B.2. $T\cup D$ covers the circle entirely] 
    First, we obtain a set $\mathcal{C}$ of $m$ (for some $m\in [n]$) components from $T$. We may see each component $C \in \mathcal{C}$ as an arc and the neighborhood set $N(C)$ as the union of the neighborhoods of the arcs contained in $C$. Observe that, for any component  $C\in \mathcal{C}$, $D$ contains either one arc from $N(C)$ that covers $C$, or two arcs from $N(C)$ where none of them individually covers $C$. 
    Let us consider one component $C  \in \mathcal{C}$. Let $l(C)$ and $r(C)$ be the left and right end points of $C$, respectively. If $|N(C)\cap D|=1$, we consider each arc from $N(C)$ separately that contains $C$.    For each such arc $I(C)\in N(C)$, we do the following three step operations --1) include $I(C)$ in $D$, 2) remove $N(C)$ from the graph, 3) include two blue arcs $(l(I(C)),r(C))$ and $(r(C),r(I(C)))$ in the vertex set of the graph. 
    Now, $T=T\cup \{[l(I(C)),r(C)), (r(C),r(I(C))]\}$. 
    We give this processed graph that is an interval graph, 
    as an input of the Steiner tree (Procedure~\ref{algo:St-int}) and look for a tree with at most $(|D|-1)$ Steiner red arcs.     Else, when $|N(C)\cap D|=2$, we take the arcs $C_\ell$ and $C_r$ from $N(C)$ with leftmost and rightmost endpoints, respectively, in $D$. We do the same three step operations --1) include $C_\ell$ and $C_r$ in $D$, 2) remove $N(C)$ from the graph, 3) include two blue arcs $(l(C_\ell), l(C)))$, $(r(C), r(C_r)))$.    
    Now, $T=T\cup \{[l(C_\ell), l(C))), (r(C), r(C_r))]\}$. 
    We give this processed graph that is an interval graph, 
    as an input of the Steiner tree (Procedure~\ref{algo:St-int}) and look for a tree with at most $(|D|-2)$ Steiner red arcs.
    This completes the procedure. 
\end{description}   

\paragraph{Correctness:} We prove that our algorithm yields an optimum solution. The proof of correctness follows from the way we have designed the algorithm. The algorithm is divided into two cases. For case~A, the primary objective is to construct the instances from a circular-arc graph to some interval graph. Thereafter, we can solve it optimally. Now, Case~B is further divided into two sub-cases. Here we know that all blue vertices present in optimum solution. Therefore, our goal is to employ the Steiner tree algorithm with terminal set $T=V_B$.  Note, for B~1 we again try all possible subsets obtained by pair of vertices $u,v$ where $r(u)\prec l(v)$ and $S_{u,v}$ contains all blue vertices. Note, $G[S_{u,v}]$ is an interval graph and $V_B$ is the set of terminals (since we assumed, w.l.o.g, $V_B\leq V_R$). Therefore we directly apply the Steiner tree procedure and obtain the optimum subset for each such pair. Indeed, this process reports the maximum \bcs. In the case B~2 we modify the input graph in three step operations. Moreover, we update the expected output size to make it coherent 
to the modified input instance. This process is done for one arbitrary
component of $T$ (of size $\ge 1$), which gives an interval graph. 
Clearly, the choice of this component makes no impact on the size of \bcs. 
Thereafter, we follow the Steiner tree algorithm on this graph. 
Moreover, the algorithm exploits all possible cases and reduce the graph into interval graph without affecting the size of the \bcs. Thereby, putting everything together, we conclude the proof.

\paragraph{Time Complexity:} The algorithm is divided into two cases. For case~A, we try all pair of arcs that holds certain condition and consider the subset (note, such subset can be computed in $\mathcal{O}(\log n)$ time given the clockwise order of the vertex set and a range tree where the arcs are stored). For each such subset we use the algorithm for interval graph to compute the maximum \bcs. This whole process takes $\mathcal{O}(n^6\log n)$ time. Now, in case~B.1., we do the same procedure, but, directly use the Procedure~\ref{algo:St-int} which saves $\mathcal{O}(n)$ time. In case~B.2., for one terminal component, we construct $\mathcal{O}(n)$ many interval graphs and apply Steiner tree algorithm directly on them to compute the maximum \bcs. Moreover, the complexity of Case~A dominates and we get the total running time $\mathcal{O}(n^6\log n)$.

\begin{theorem}
    Given an $n$ vertex circular-arc graph $G$, where the vertices in $G$ are colored either red or blue, the \bcs~problem on $G$ can be solved in $\mathcal{O}(n^6 \log n)$ time.
\end{theorem}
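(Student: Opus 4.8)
The plan is to reduce the circular-arc problem to a polynomial number of calls to the interval-graph \bcs~algorithm of the previous subsection and to the interval Steiner-tree routine (Procedure~\ref{algo:St-int}). First I would dispose of the degenerate situation: if the input arcs do not together cover the circle, then $G$ is in fact an interval graph and the $\mathcal{O}(n^4\log n)$ algorithm applies; so I may assume the arcs cover the circle, and by symmetry that $|V_B|\le|V_R|$. Let $H$ be an optimum balanced connected subgraph and $S=V(H)$. Since $H$ is connected, $S$ covers a contiguous stretch of the circle, so either $H$ leaves some point of the circle uncovered (Case~A) or it covers the whole circle (Case~B), and the argument branches accordingly.

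In Case~A, pick a point $p$ of the circle not covered by $S$; the arcs through $p$ form a nonempty clique $K$ disjoint from $S$, and cutting the circle at $p$ turns $G-K$ into an interval graph that still contains $H$ as an induced subgraph. The family of vertex sets $S_{u,v}=\{w \colon l_v\le l_w<r_w\le r_u\}\cup\{u,v\}$, taken over all ordered pairs $u,v$ with $r(u)\prec l(v)$, ranges over all such "cut" subgraphs; hence running the interval-graph \bcs~algorithm on each $G[S_{u,v}]$ and returning the largest output recovers the optimum for this case, because no induced subgraph can host a balanced connected subgraph larger than $|V(H)|$ and the right choice of $u,v$ attains it.

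In Case~B the subgraph $H$ wraps all the way around the circle, which forces $|S|=2|V_B|$ and $V_B\subseteq S$. I therefore set $T=V_B$ as terminals and seek a minimum-cardinality set $D\subseteq V_R$ of Steiner arcs connecting $T$, padding with $|V_R|-|D|$ further red arcs afterwards when needed. Now sub-split on whether $T\cup D$ covers the circle. If it does not (Case~B.1), the same "cut at an uncovered point" reduction as in Case~A applies, except Procedure~\ref{algo:St-int} is invoked directly on the resulting interval graphs. The real obstacle is Case~B.2, where even a minimum Steiner tree wraps around. Here I fix one component $C$ of $T$ and use the structural fact that $D$ meets $N(C)$ either in a single arc that covers $C$ or in two arcs that jointly cover $C$ with neither covering $C$ alone; for each admissible choice of this covering arc (or pair) I perform the three-step surgery of the algorithm — add the chosen arc(s) to $D$, delete $N(C)$, and insert one (resp.\ two) new blue arcs representing the stretch of the circle that $C$ and the covering arc(s) span — which makes the instance an interval graph, and then run Procedure~\ref{algo:St-int} on it with the Steiner budget reduced by $1$ (resp.\ $2$). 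I expect the crux of correctness to be arguing that this surgery is lossless: that connectivity and the target count $2|V_B|$ are preserved, that the inserted blue arcs faithfully simulate the adjacencies across the cut, and that some choice of covering arc(s) for some component exactly reconstructs the wrap-around pattern of $H$.

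For the running time, Case~A is the bottleneck: there are $\mathcal{O}(n^2)$ ordered pairs $u,v$; storing the arcs in a range tree lets each $S_{u,v}$ be extracted in $\mathcal{O}(\log n)$ time; and each invocation of the interval-graph \bcs~algorithm costs $\mathcal{O}(n^4\log n)$, giving $\mathcal{O}(n^6\log n)$ in total. Cases~B.1 and B.2 are no more expensive — B.2 builds only $\mathcal{O}(n)$ interval graphs per component and calls the cheaper Steiner routine directly — so the entire algorithm runs in $\mathcal{O}(n^6\log n)$ time, which proves the theorem.
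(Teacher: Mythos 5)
Your proposal follows essentially the same route as the paper: the same reduction to the interval case when the arcs do not cover the circle, the same Case~A/Case~B split on whether the optimum wraps around, the identical sub-split B.1/B.2 with the three-step surgery on a terminal component, and the same $\mathcal{O}(n^2)\times\mathcal{O}(n^4\log n)$ accounting for the bottleneck Case~A. The one point you flag as the crux --- that the B.2 surgery is lossless --- is also the point the paper treats only informally, so your argument matches the paper's in both structure and level of detail.
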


\subsection{Permutation Graphs}
In this section, we study the \bcs~problem on permutation graphs. 
A graph $G=(V,E)$ with $|V|=n$ is called a permutation graph if there exists a bisection 
$f \colon V \rightarrow \{1,2,\ldots,n\}$ and a permutation $\pi$ of order $n$ such that for every pair of vertices $u,v \in  V, (u,v)\in E \Leftrightarrow (f (u)-f (v)) (\pi(f (u)) -\pi(f (v))) < 0$. 
This can be represented as an intersection graph of line segments whose endpoints lie on parallel lines $ y=0 $ and $ y=1 $. We order the vertices of $G$ based on the endpoints of their corresponding lines on  $y=1$. Let $v_1 < v_2< \ldots < v_n$ be the ordering, 
where $v_i < v_j \Leftrightarrow p_i < p_j $ (Assuming $(p_i,1)$ and $(p_j,1)$ are the endpoints of the lines corresponding to vertices $v_i$ and $v_j$, respectively). For each pair of vertices $ v_i,v_j \in V $ (where $1 \leq i < j \leq n$), we define the subgraph $ G_{i,j} $  induced by $\{v_{i}, v_{i+1}, \dots,v_{j-1}, v_j\}$ in $G$. Let $ r_{i,j} $ and $ b_{i,j} $ denote the set of red vertices and  blue vertices in $G_{ij}$, respectively.

We propose an algorithm to compute 
maximum \mmbcs~on permutation graphs.   
We search for all pair of numbers $ i $ and $ j $ for which $ G_{i,j} $ is connected and $ G_{i,j} $  has  a \bcs~with size $ \min  \{2|r_{i,j}|, 2|b_{i,j}|\} $. Finally, we report a \bcs~having the maximum size. For each connected subgraph $G_{i,j}$, to get  a \bcs~(if exists) with size $\min  \{2|r_{i,j}|, 2|b_{i,j}|\} $,  we apply the algorithm for the Steiner tree 
problem on permutation graphs with terminal set 
$b_{i,j}$ (if $ |b_{i,j}| \leq |r_{i,j}|$) or  
$r_{i,j}$ (if $ |r_{i,j}| < |b_{i,j}|$).  We use the following theorem to solve Steiner tree problem in $G_{i,j}$.

\begin{theorem} \cite{colbourn1990permutation}
	A minimum cardinality Steiner tree in an $ n $ vertex permutation graph can be found in $\mathcal{O}(n^3)$ time.
\end{theorem}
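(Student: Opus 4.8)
Since this theorem is quoted verbatim from Colbourn and Stewart~\cite{colbourn1990permutation}, the plan is to recall the dynamic‑programming idea behind their $\mathcal{O}(n^3)$ algorithm rather than to invent a new proof. I would start from the standard geometric model already used in this section: realise the permutation graph by $n$ segments, vertex $v_i$ being the segment from $(i,1)$ on the top line to $(\pi(i),0)$ on the bottom line, so that $v_iv_j\in E$ iff the two segments cross, i.e.\ iff $i<j$ and $\pi(i)>\pi(j)$. Fix the top‑endpoint order $v_1<\cdots<v_n$. The first step is a clean connectivity criterion: for $U=\{u_1<\cdots<u_m\}$ listed in top order, every connected component of $G[U]$ is an \emph{interval} of this order (the contrary would force a forbidden chain of inequalities among the $\pi$‑values), and hence $G[U]$ is connected iff for every split $k\in\{1,\dots,m-1\}$ some segment among $u_1,\dots,u_k$ crosses some segment among $u_{k+1},\dots,u_m$; in coordinates,
\[
\max_{1\le \ell\le k}\pi(u_\ell)\;>\;\min_{k<\ell\le m}\pi(u_\ell),
\]
that is, the subsequence $\pi|_{U}$ must be \emph{sum‑indecomposable}. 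Minimising a Steiner tree thus amounts to adding the fewest positions $S$ to the terminal set $T$ so that $\pi|_{T\cup S}$ is sum‑indecomposable.

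Next I would design a left‑to‑right sweep over $v_1,\dots,v_n$ that builds the chosen set $T\cup S$, forcing every terminal to be taken and paying $1$ for each non‑terminal taken, while maintaining only a constant amount of boundary information about the part already chosen — essentially the vertex currently realising the largest bottom‑endpoint coordinate (the current ``reach'' on the bottom line) together with whatever additional bounded data is needed to certify that no prefix of the chosen part is still forced to become a direct‑sum decomposition point of the final sequence. This gives a table indexed by $\mathcal{O}(n^2)$ boundary configurations (a left boundary vertex and a reach vertex); each transition considers the $\mathcal{O}(n)$ possible next vertices, updates the boundary monotonically, and increments the Steiner count exactly when a non‑terminal is chosen. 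Reading off the minimum over all initial boundaries and over terminal states in which every terminal has been absorbed and the indecomposability invariant is met yields the optimum cardinality; a routine backtrack recovers an actual tree. The running time is $\mathcal{O}(n^2)\cdot\mathcal{O}(n)=\mathcal{O}(n^3)$.

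The crux — and the step I expect to be the main obstacle — is proving that a constant‑size frontier really is sufficient, i.e.\ an interchangeability lemma: two partial solutions occupying the same prefix of the top order and agreeing on the boundary data extend to connected Steiner forests of the same cost, so the DP never discards an optimum. This is exactly where the two‑dimensional incomparability geometry of permutation graphs is essential: whether a future vertex attaches to the built part depends only on whether it crosses the current frontier segment, and the connectivity criterion above is monotone in the running maximum of bottom endpoints, which is what lets the ``deferred obligations'' from earlier split points be summarised by bounded boundary data. Pinning down the exact state that captures these deferred obligations, and verifying the invariant under every transition, is the technical heart of the argument; once it is in place, the $\mathcal{O}(n^3)$ bound — and hence its use as the Steiner‑tree subroutine inside the \bcs\ algorithm for permutation graphs, which adds only colour/terminal bookkeeping and a polynomial number of calls over vertex pairs — follows without further difficulty.
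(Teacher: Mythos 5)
This statement is not proved in the paper at all: it is imported verbatim from Colbourn and Stewart \cite{colbourn1990permutation} and used as a black box inside the permutation-graph algorithm, so there is no in-paper argument to compare yours against. Judged on its own terms, your reconstruction contains a genuine gap. The first half is fine: the interval-structure of components in the top-endpoint order and the resulting criterion that $G[U]$ is connected iff $\max_{\ell\le k}\pi(u_\ell)>\min_{\ell>k}\pi(u_\ell)$ for every split $k$ (i.e.\ $\pi|_U$ is sum-indecomposable) are correct and correctly argued, and the reformulation of the Steiner problem as ``add the fewest non-terminals so that the chosen subsequence is sum-indecomposable'' is sound. But the dynamic program that is supposed to deliver $\mathcal{O}(n^3)$ is never actually specified: the state is described as ``a left boundary vertex and a reach vertex'' plus ``whatever additional bounded data is needed,'' and you yourself label the interchangeability lemma --- the claim that this bounded frontier loses no optimal solution --- as ``the main obstacle'' and ``the technical heart'' without proving it. A proof cannot defer its own crux; as written, neither correctness nor the $\mathcal{O}(n^3)$ bound is established.

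The missing lemma is, however, within reach of your own connectivity criterion, so the approach is not wrong, merely unfinished. After committing to a chosen prefix, each internal split $k$ leaves the deferred obligation that some later chosen vertex have $\pi$-value strictly below the prefix maximum $M_k$; since the $M_k$ are nondecreasing in $k$, every outstanding obligation is implied by the one with the smallest threshold, so the whole set of deferred obligations collapses to a single number, and the state $(\text{current running maximum},\ \text{smallest unsatisfied threshold})$ --- $\mathcal{O}(n^2)$ states, $\mathcal{O}(n)$ transitions, terminals forced into the chosen set --- yields a provably correct $\mathcal{O}(n^3)$ sweep. Supplying that argument (and the boundary cases for the first and last chosen vertices) would turn your sketch into an actual alternative to citing \cite{colbourn1990permutation}; without it, the honest course is to do what the paper does and cite the result.
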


Now, let $T_{i,j}$ be the solution of Steiner tree problem in $ G_{i,j} $ with terminal vertices $ b_{i,j}$, where $ |b_{i,j}| \leq |r_{i,j}|$. 
Recall that, we only consider the cases where $G_{i,j}$ is connected. If the number of red vertices in $T_{i,j}$ is less than $2|b_{i,j}|$, 
then we obtain a \bcs~of size $2|b_{i,j}|$ 
by simply adding the required number of red vertices.
Note, this does not affect the connectivity. 
By using this method we find a \bcs~with size  $\min \{2|r_{i,j}|, 2|b_{i,j}|\} $ in 
$G_{i,j}$ for $1 \leq i < j \leq n $.

\paragraph{Correctness:}
We prove that our algorithm yields an optimum solution.
Let $H$ be an optimal solution of size $t$. Let $H$ consists of vertices $v_{k_1}, v_{k_2}, \dots, v_{k_t}$, where $k_1 < k_2 , \dots, k_t$. $G_{k_1,k_t}$ must be connected (otherwise, $H$ becomes disconnected). So we apply Steiner tree problem in  $G_{k_1,k_t}$. Now, we show that $|V(H)|= \min  \{2|r_{i,j}|, 2|b_{i,j}|\} $. Let  $|V(H)| \neq  \min  \{2|r_{i,j}|, 2|b_{i,j}|\} $ then there exists at least one blue vertex $u$ and one red vertex $v$, such that $u,v \in G_{i,j} \setminus H$. As $G_{k_1,k_t}$ is an intersection graph of line segments corresponding to the vertices $v_{k_1}, v_{k_2}, \dots, v_{k_t}$, where $k_1 < k_2< \dots< k_t$, and $H$ contain both $v_{k_1}$ and $v_{k_t}$. So, $N[u] \cap H \neq \phi$ and $N[v] \cap H \neq \phi$. Therefore $V(H) \cup \{u,v\}$ induces a balanced connected subgraph in $G$. It contradicts to the fact that $H$ is a maximum induced balanced connected subgraph in $G$. Thereby, we conclude the proof.

\paragraph{Time Complexity:} We use the algorithm \cite{colbourn1990permutation} to find minimum cardinality Steiner tree as a subroutine that we call for every pair of integers $i,j$ where $1 \le i \le j \le n$ and $G_{i,j}$ is connected. In case of permutation graphs, we may require linear time to obtain an induced subgraph $G_{i,j}$ for any pair of vertices $u,v\in V$.
Now, computing Steiner on permutation graphs takes $\mathcal{O}(n^3)$ time, and 
thus our algorithm computes a \bcs~in $O(n^6)$ time.

\begin{theorem}
	Given an $n$ vertex permutation graph $G$, where the vertices in $G$ are colored either red or blue, the \bcs~problem on $G$ can be solved in $\mathcal{O}(n^6)$ time.
\end{theorem}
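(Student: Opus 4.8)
The plan is to reduce the search over all induced connected balanced subgraphs to a search over the $\mathcal{O}(n^2)$ \emph{windows} $G_{i,j}$ (the subgraph induced by $v_i,\dots,v_j$ in the left-to-right order of the top endpoints on the line $y=1$), and to solve each window optimally by invoking the known $\mathcal{O}(n^3)$-time Steiner-tree algorithm on permutation graphs. First I would fix a permutation/segment representation of $G$ and the ordering $v_1<\dots<v_n$. The first structural observation is that if $H$ is any induced connected subgraph and $v_{k_1},v_{k_t}$ are the vertices of $H$ with smallest and largest top-endpoint, then $V(H)\subseteq\{v_{k_1},\dots,v_{k_t}\}=V(G_{k_1,k_t})$, and moreover $G_{k_1,k_t}$ is itself connected; hence an optimum \bcs~is realized inside one of the windows, and it suffices to find the best \bcs~in each window.

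The heart of the correctness is an exchange argument: an optimum \bcs~$H$ whose extreme vertices are $v_i=v_{k_1}$ and $v_j=v_{k_t}$ must have exactly $\min\{2|r_{i,j}|,2|b_{i,j}|\}$ vertices, i.e., it already contains every vertex of the minority colour present in $G_{i,j}$. Suppose not; without loss of generality $|b_{i,j}|\le|r_{i,j}|$, so $H$ is balanced with strictly fewer than $|b_{i,j}|$ blue vertices, and there is a blue vertex $u$ and a red vertex $v$ of $G_{i,j}$ with $u,v\notin V(H)$. The key claim is that every vertex $v_m$ of $G_{i,j}$ with $k_1<m<k_t$ has a neighbour in $V(H)$: since $H$ is connected and contains $v_{k_1}$ (top endpoint $<p_m$) and $v_{k_t}$ (top endpoint $>p_m$), some edge $v_av_b$ of $H$ has $p_a<p_m<p_b$; then $v_a\sim v_b$ forces $\pi(p_a)>\pi(p_b)$, and if $v_m$ were non-adjacent to both $v_a$ and $v_b$ we would get $\pi(p_a)<\pi(p_m)<\pi(p_b)$, a contradiction. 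Applying this to $u$ and to $v$, the subgraph induced by $V(H)\cup\{u,v\}$ is connected and balanced, contradicting the maximality of $H$. A companion fact, proved the same way, is that for the window $G_{k_1,k_t}$ a spanning tree of $H$ is a connected subgraph spanning all $|b_{k_1,k_t}|$ minority terminals using at most $|b_{k_1,k_t}|$ majority (Steiner) vertices, so the \emph{minimum} Steiner tree on that window also uses at most that many Steiner vertices.

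Given this, the algorithm is: for every pair $i<j$ with $G_{i,j}$ connected, run the $\mathcal{O}(n^3)$-time permutation-graph Steiner-tree algorithm with the minority colour class of $G_{i,j}$ as terminal set, obtaining a tree $T_{i,j}$ that contains all minority vertices together with some majority Steiner vertices; if the number of majority vertices used is at most the minority count, top it up with arbitrary further majority vertices of $G_{i,j}$ (connectivity is preserved, since extra vertices are only added) to obtain a \bcs~of size $\min\{2|r_{i,j}|,2|b_{i,j}|\}$, and otherwise discard this pair. Report the largest \bcs~found over all pairs. By the structural results, at the window containing an optimum the top-up step succeeds and yields a \bcs~of optimum size, while no window can yield a larger one; hence the output is optimal. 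For the running time, there are $\mathcal{O}(n^2)$ windows, each handled by a single call to the $\mathcal{O}(n^3)$ Steiner-tree routine plus polynomial bookkeeping to build $G_{i,j}$, test connectivity, and top up, matching the claimed $\mathcal{O}(n^6)$ bound.

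The main obstacle I anticipate is precisely the structural part — the crossing lemma showing that every vertex of the window $G_{k_1,k_t}$ is adjacent to some vertex of the optimal solution $H$, which drives both the exchange argument (forcing $H$ to contain all minority vertices) and the claim that the minimum Steiner tree on the optimal window does not overshoot its Steiner-vertex count. Once these are established, the remaining arguments (restriction to windows, the top-up step, and the time analysis) are routine.
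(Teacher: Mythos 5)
Your proposal follows essentially the same route as the paper: enumerate the $\mathcal{O}(n^2)$ windows $G_{i,j}$, run the $\mathcal{O}(n^3)$ permutation-graph Steiner-tree algorithm with the minority colour as terminals, top up with majority vertices, and justify optimality by the exchange argument showing the optimum window's \bcs{} has size $\min\{2|r_{i,j}|,2|b_{i,j}|\}$. Your explicit crossing lemma (every $v_m$ with $k_1<m<k_t$ has a neighbour in $H$) is exactly the fact the paper asserts as $N[u]\cap H\neq\phi$, so the argument is correct and matches the paper's.
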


\subsection{FPT Algorithm}\label{fpt}

In this section, we show that  the \bcs~problem is fixed-parameter tractable for general graphs while parameterized by the solution size. 
Let $G=(V,E)$ be a simple connected graph, 
and let $k$ be a given parameter. 
A family $\mathcal{F}$ of functions from $V$ to $\{1,2, \dots, k\}$ 
is called a \colg{perfect hash family} for $V$ if the following condition holds. For any subset $U \subseteq V$ with $|U|=k$, 
there is a function $f \in \mathcal{F}$ which is injective on $U$, i.e., $ f|_U $ is one-to-one. For any graph of $ n $ vertices and a positive integer $ k $, 
it is possible to obtain a perfect hash family of size $2^{\mathcal{O}(k)} \log n$ in
$2^{\mathcal{O}(k)} n\log n$ time; see \cite{alon1995color}. 
Now, the $k$-\bcs~problem can be defined as follows. 

\begin{tcolorbox}[colback=red!5!white]
	\noindent {\bf{\color{red!50!black} {$k$-\mmbcs~Problem ($k$-\bcs)}}}\\
	{\bf Input:} A graph $G=(V,E)$, with node set $V=V_R\cup V_B$ partitioned into red nodes ($V_R$) and blue nodes ($V_B$), and a positive integer $k$.\\
	{\bf Output:} Balanced connected subgraph of size $k$.
\end{tcolorbox}

We employ two methods to solve the $k$-\bcs~problem: (i) \colg{color coding technique}, and (ii) \colg{batch procedure}. Our approach is motivated by the approach of Fellow et al.~\cite{Fellows2011}, where they have used these techniques to provide a FPT-algorithm for the graph motif problem. Suppose $H$ is a solution to the $k$-\bcs~problem and  $\mathcal{F}$ is a perfect hash family for $V$. This ensures us that at least one function of $\mathcal{F}$ assigns vertices of $H$ 
with $k$ distinct labels. Therefore, if we iterate through all functions of $\mathcal{F}$ and find the 
subsets of $V$ of size $k$ that are distinctly labeled by our hashing, we are guaranteed to find $H$.
Now, we try to solve the following problem: 
Given a hash function $f \colon V \rightarrow \{1,2, \dots, k\}$ from perfect hash family $\mathcal{F}$, decide if there is a subset $ U \subseteq V $ with  $|U|= k $ such that $ G[U] $ is a  balanced connected subgraph of $ G $ and  $ f|_U $ is one-to-one.

First, we create a table, denoted by $M$. 
For a label $L \subseteq \{1,2,\dots,k\}$ 
and a color-pair $(b,r)$ of non-negative integers where $b+r=|L|$, 
we put $ M[v~ ;~ L,~ (b,r)] =1$ if and only if  there exists a subset $ U\subseteq V $ of vertices such that following conditions holds: 
  \begin{itemize}
    \item[(i)] $v \in U $,
    \item[(ii)] $f|_U= L$,
    \item[(iii)] $ G[U] $ is connected,
    \item[(iv)] $U$ consisting exactly $b$ blue vertices and $ r $ red vertices.
\end{itemize}

Notice that, the total number of entries of this table is $\mathcal{O}(2^k kn)$. If we can fill all the entries of the table $ M $, then we can just look at the entries $ M[v~ ; ~\{1,2,\dots, k\}, ~(\frac{k}{2},\frac{k}{2})] $, $ \forall v \in V $, and if any of them is one then we can claim that the $ k$-\bcs~problem has a solution. Now we use the batch procedure to compute $ M[v~ ;~ L,~ (b,r)] $ for each subset $ L \subseteq \{1, 2,\dots k\} $, and for each pair $ (b,r) $ of non-negative integers such that $ (b+r)=|L|$. Now, we explain the batch procedure. Without loss of generality we assume that $ L= \{1, 2, \dots, t\} $, $ f(v)=t $, and the color of $ v $ is red. 

\vspace{.2cm}

\noindent \textbf{\underline{Batch Procedure ($ v, L, (b,r) $):}}
 
 \begin{description}
 	\item[(1)] \textbf{Initialize:} Construct the set $ \mathcal{S} $ of pairs $ (L', (b',r')), $ where $ b'+r'=|L'| $ such that $ L' \subseteq \{1, 2,\dots, t-1\} $, ~$ b' \leq b,~r' \leq  r-1$ and $ M [u~ ;~ L',~ (b',r')] =1$ for some neighbour $ u $ of $ v $.
 	
 	\item[(2)] \textbf{Update:} If there exists two pairs $\{(L_1, (b_1,r_1)), (L_2, (b_2,r_2))\}  \in \mathcal{S} $ such that 	$ L_1 \cap L_2 = \phi $ and $ (b_1, r_1) + (b_2, r_2) \leq  (b, r-1)$, then add $(L_1 \cup L_2, (b_1+b_2, r_1+r_2)) $ into $ \mathcal{S} $. Repeat this step until unable to add any more.
 	
 	\item[(3)] \textbf{Decide:} Set $ M[v~ ;~ L,~ (b,r)] =1$ if $ (\{1, 2, \dots, t-1\}, (b, r-1)) \in \mathcal{S}$, else 0.

 \end{description}

\begin{lemma}
	
	The batch procedure correctly computes  $ M[v~ ;~ L,~ (b,r)]$ for all $ v $, $ L \subseteq \{1, 2, \dots, k \} $ with $ b+r= |L| $.
	
	\end{lemma}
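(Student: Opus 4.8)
The plan is to prove the lemma by induction on $|L|$ (equivalently, on $t$ after the w.l.o.g.\ renaming $L=\{1,\dots,t\}$), showing that the table entry $M[v\,;\,L,(b,r)]$ computed by the batch procedure equals $1$ precisely when a witness set $U$ with the four stated properties exists. I would first fix notation: assume w.l.o.g.\ that $L=\{1,\dots,t\}$, $f(v)=t$, and $v$ is red, so that any valid $U$ must contain $v$, use label $t$ only at $v$, and the remaining vertices $U\setminus\{v\}$ carry labels from $\{1,\dots,t-1\}$ and have color counts $(b,r-1)$. The base case is $|L|=1$: then $U=\{v\}$, $b=0$, $r=1$, and the procedure sets the entry to $1$ iff $(\emptyset,(0,0))\in\mathcal S$, which holds vacuously; this matches the witness. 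For the inductive step, I would assume the claim holds for all strictly smaller label sets and for all vertices, and prove both directions.

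For soundness (entry $=1$ $\Rightarrow$ witness exists), suppose the procedure sets $M[v\,;\,L,(b,r)]=1$, i.e.\ $(\{1,\dots,t-1\},(b,r-1))\in\mathcal S$. I would argue by a secondary induction on the number of Update-steps used to place a pair into $\mathcal S$ that every pair $(L',(b',r'))\in\mathcal S$ is \emph{achievable in the neighborhood of $v$}: there is a set $W$ with $f|_W=L'$, color counts $(b',r')$, $G[W]$ connected, and at least one vertex of $W$ adjacent to $v$. For pairs added in the Initialize step this is exactly the meaning of $M[u\,;\,L',(b',r')]=1$ together with the inductive hypothesis of the main induction (since $|L'|<|L|$) and the fact that $u\in N(v)$. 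For pairs added in an Update step from $(L_1,(b_1,r_1))$ and $(L_2,(b_2,r_2))$ with $L_1\cap L_2=\emptyset$, the corresponding witness sets $W_1,W_2$ are disjoint (disjoint label sets force disjointness), and $G[W_1\cup W_2\cup\{v\}]$ is connected because each $W_i$ is internally connected and each attaches to $v$; hence $W_1\cup W_2$ is a valid neighborhood-achievable witness for the merged pair. Applying this to the final pair $(\{1,\dots,t-1\},(b,r-1))$ and adjoining $v$ yields the desired $U$.

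For completeness (witness exists $\Rightarrow$ entry $=1$), suppose $U$ satisfies (i)--(iv). Since $v\in U$ and $f(v)=t$ appears uniquely, $U'=U\setminus\{v\}$ is nonempty (as $b+r\ge 2$), $f|_{U'}=\{1,\dots,t-1\}$, and $U'$ has color counts $(b,r-1)$. Because $G[U]$ is connected and $v$ has label $t$ used nowhere else, the components of $G[U']$ are exactly the pieces obtained by deleting $v$; each component $W_i$ is connected, has a vertex adjacent to $v$, and carries a label subset $L_i\subseteq\{1,\dots,t-1\}$ with the $L_i$ pairwise disjoint and $\bigcup_i L_i=\{1,\dots,t-1\}$. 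By the main inductive hypothesis (each $|L_i|<|L|$) and the fact that some vertex of $W_i$ is a neighbor of $v$, each pair $(L_i,(b_i,r_i))$ enters $\mathcal S$ in the Initialize step. Since $\mathcal S$ is closed under the Update merge of disjoint-label pairs whose color sums stay $\le(b,r-1)$, an induction on the number of components shows $\big(\bigcup_i L_i,(\sum_i b_i,\sum_i r_i)\big)=(\{1,\dots,t-1\},(b,r-1))$ is eventually added to $\mathcal S$; hence the Decide step outputs $1$. The symmetric cases ($v$ blue, or $L$ relabeled) are identical up to swapping the roles of $b$ and $r$.

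The main obstacle I anticipate is handling the merge step carefully: one must verify that requiring $L_1\cap L_2=\emptyset$ is both \emph{necessary} (so the two partial witnesses do not reuse a label, which would violate injectivity of $f|_U$) and \emph{sufficient} for their witness sets to be disjoint, and simultaneously that adding $v$ glues the pieces into a single connected graph without creating any label collision (which is why $v$'s label $t\notin\{1,\dots,t-1\}$ is essential). A second delicate point is the order of quantifiers in the induction: the Update step may be iterated many times, so the soundness direction needs the secondary induction on the number of Update applications, and the completeness direction needs an induction on the number of components of $G[U\setminus\{v\}]$ to show the target pair is reachable by repeated merges; getting these two nested inductions stated cleanly, rather than the outer induction on $|L|$ alone, is the part most likely to require care.
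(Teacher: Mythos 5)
Your proof takes essentially the same approach as the paper's (the paper only sketches the soundness direction and asserts that the converse ``follows from the same idea''); yours is the careful version, with the outer induction on $|L|$ and the two nested inductions (on Update applications, and on the components of $G[U\setminus\{v\}]$) made explicit. One small repair: the invariant you maintain for pairs in $\mathcal{S}$ should read ``$G[W\cup\{v\}]$ is connected'' rather than ``$G[W]$ is connected and some vertex of $W$ is adjacent to $v$'' --- the latter is not preserved by the Update step, since a merged witness $W_1\cup W_2$ need not induce a connected subgraph on its own, whereas the former is exactly the fact you verify for the merge and is all you need when you finally adjoin $v$ in the Decide step.
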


\begin{proof}
		Without loss of generality we assume that $ L= \{1, 2, \dots, t\}, $ $ f(v)=t $ and color of $ v $ is red. We have to show that $ M[v~ ;~ L,~ (b,r)]=1 \Leftrightarrow$ there exists a connected subgraph, containing $ v $, and having exactly $ b $ blue vertices and $ r $ red vertices. Firstly, we assume that  $ M[v~ ;~ L,~ (b,r)]=1$.  So, $ (\{1, 2, \dots, t-1\}, (b, r-1)) \in \mathcal{S}$. 
		So, there must exist some neighbours $ \{v'_1, v'_2, \dots ,v'_l\} $ of $v$ such that $ M[v'_1 ;~ L_1,~ (b_1,r_1)]=M[v'_2 ;~ L_2,~ (b_2,r_2)]=\dots= M[v'_l ;~ L_l,~ (b_l,r_l)]=1$ with $\bigcup\limits_{i=1}^{l} L_{i}=L \setminus \{t\}, \sum_{i=1}^{l}b_i=b, \sum_{i=1}^{l}r_i=r-1$ where $ L_1, L_2, \dots, L_l $ are pairwise disjoint. Thus, there exists a connected subgraph containing $ \{v, v'_1, \dots, v'_l\} $ having exactly $ b $ blue vertices and $ r $ red vertices.
	Other direction of the proof follows from the same idea.
	\end{proof}

\begin{lemma}
	
Given a hash function $ f \colon V \rightarrow \{1,2, \dots , k\} $, batch procedure fill all the entries of table $ M $ in $ \mathcal{O}(2^{4k}k^3 n^2) $ time.
	
\end{lemma}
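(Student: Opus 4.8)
The plan is to bound the running time of the batch procedure for a single entry $M[v \; ; \; L, (b,r)]$ and then multiply by the number of entries. First I would observe that the table $M$ has $\mathcal{O}(2^k k n)$ entries: there are $n$ choices for $v$, at most $2^k$ choices for the label set $L \subseteq \{1,2,\dots,k\}$, and for each $L$ at most $|L|+1 \le k+1$ admissible color-pairs $(b,r)$ with $b+r = |L|$. So it suffices to show each invocation of the batch procedure runs in $\mathcal{O}(2^{3k} k^2 n)$ time, which combined with the $\mathcal{O}(2^k k n)$ entries gives the claimed $\mathcal{O}(2^{4k} k^3 n^2)$.

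Next I would analyze the three phases of a single invocation. In the \textbf{Initialize} phase we scan all neighbours $u$ of $v$ (at most $n$ of them) and, for each, look up the $\mathcal{O}(2^k k)$ relevant entries $M[u \; ; \; L', (b',r')]$ with $L' \subseteq \{1,\dots,t-1\}$, $b' \le b$, $r' \le r-1$; this populates $\mathcal{S}$ in $\mathcal{O}(2^k k n)$ time, and $|\mathcal{S}| = \mathcal{O}(2^k k)$ since each element is indexed by a pair $(L', (b',r'))$. The \textbf{Update} phase is the dominant one: each round we look for two compatible pairs $(L_1,(b_1,r_1)), (L_2,(b_2,r_2)) \in \mathcal{S}$ with $L_1 \cap L_2 = \emptyset$ and $(b_1,r_1) + (b_2,r_2) \le (b,r-1)$, which costs $\mathcal{O}(|\mathcal{S}|^2) = \mathcal{O}(2^{2k} k^2)$ per round (checking disjointness of two label sets takes $\mathcal{O}(k)$, absorbed into a polynomial-in-$k$ factor or handled by bitmask operations in $\mathcal{O}(1)$ word operations). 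Since $\mathcal{S}$ can only grow and is bounded by $\mathcal{O}(2^k k)$ throughout, there are at most $\mathcal{O}(2^k k)$ successful additions, so the total cost of the Update phase is $\mathcal{O}(2^k k) \cdot \mathcal{O}(2^{2k} k^2) = \mathcal{O}(2^{3k} k^3)$. The \textbf{Decide} phase is a single table lookup, $\mathcal{O}(1)$.

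Finally I would multiply: the per-entry cost is dominated by $\max\{\mathcal{O}(2^k k n), \mathcal{O}(2^{3k} k^3)\}$, and over all $\mathcal{O}(2^k k n)$ entries this yields a total of $\mathcal{O}(2^k k n) \cdot \mathcal{O}(2^{3k} k^3 + 2^k k n) = \mathcal{O}(2^{4k} k^4 n + 2^{2k} k^2 n^2)$; being generous with the $k$-factors and absorbing the lower-order term (using $2^{2k} \le 2^{4k}$ and treating the $n^2$ term as dominated under the intended parameter regime, or simply bounding everything by the larger expression), this is $\mathcal{O}(2^{4k} k^3 n^2)$ as claimed. One subtlety worth spelling out is why the Update loop terminates and why $|\mathcal{S}|$ stays bounded: every element added is a pair $(L', (b',r'))$ with $L' \subseteq \{1,\dots,t-1\}$ and $b'+r' = |L'|$, and there are only $\mathcal{O}(2^k k)$ such pairs, so the "repeat until unable to add any more" loop runs for at most that many iterations.

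The main obstacle I expect is getting the bookkeeping on the number of Update rounds and the size of $\mathcal{S}$ exactly right — in particular making sure that the cost of the disjointness/addition test inside each round, the number of rounds, and the number of candidate pairs per round all multiply out to land within the stated $\mathcal{O}(2^{4k} k^3 n^2)$ bound rather than something slightly larger; the $k$-exponents are forgiving here, but the $2^k$ exponents must be tracked carefully (Initialize contributes one $2^k$ from the neighbour scan times one from the entry lookup, Update contributes three, and the outer table has one more, but these do not simply add unless one is careful about which phase dominates for which parameter ranges).
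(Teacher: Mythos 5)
Your proposal is correct and follows essentially the same route as the paper: bound the table size by $\mathcal{O}(2^k k n)$, bound one invocation by $\mathcal{O}(2^k k n)$ for Initialize plus $\mathcal{O}(2^k k)$ rounds of Update at $\mathcal{O}(2^{2k}k^2)$ each, and multiply. The final absorption of $2^{4k}k^4 n$ into $2^{4k}k^3 n^2$ uses $k \le n$, which is the same implicit step the paper makes when it writes the per-entry cost as $\mathcal{O}(2^{3k}k^2 n)$.
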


\begin{proof}
	In the first step of the batch procedure, the initialization depends on the number of the search process in the entries correspond to the neighbour of $v$. 
	Now, number of the search process is bounded by the size of $M$. 
	The first step takes $\mathcal{O}( 2^{k}kn)$ time. Now the size of $ \mathcal{S} $ can be at most $ 2^{k}k$. Each update takes 
	$\mathcal{O}(2^{2k}k^2)$ time. So step~$2$ 
	takes $\mathcal{O}(2^{3k}k^3)$ time. Step~$1$ and $2$ together takes $\mathcal{O}(2^{3k}k^3+ 2^kkn)= \mathcal{O}(2^{3k}k^2n) $ time. After the first two steps, the value of $ M[v~ ;~ L,~ (b,r)]$ can be decided in $\mathcal{O}(2^kk)$ time. These three steps are independent in terms of the running time. As the number of entries in $M$ is $\mathcal{O}(2^kkn)$.
	Hence, the total running time to fill all the entries in $ M $ is $ \mathcal{O}(2^{4k}k^3 n^2) $. \qed
\end{proof}

\noindent Now, our algorithm for the $ k $-\bcs~problem is the following:
\begin{description}
	\item[1.]Construct a perfect hash family $ \mathcal{F} $ of $ 2^{\mathcal{O}(k)}\log n$ hash functions in  $ 2^{\mathcal{O}(k)} n\log n $ time.
	
	 \item[2.] For each function, $f \in \mathcal{F}$ build the table $ M $ using batch procedure. For each function $f\in \mathcal{F}$ it takes $ \mathcal{O}(2^{4k}k^3 n^2) $ time.
	 
	 \item[3.] As each $f\in \mathcal{F}$ is perfect, by an exhaustive search through all function in $ \mathcal{F} $ our algorithm correctly decide whether there exists a balanced connected subgraph of $k$ vertices.
	 We output yes, if and only if there is a vertex $ v $ and $ f \in \mathcal{F}$ for which the corresponding table $ M $ contains the entry one in $  M[v~ ; ~\{1,2,\dots, k\}, ~(\frac{k}{2},\frac{k}{2})] $.
\end{description}

\begin{theorem}
The $k$-\bcs~problem can be solved in time 
$2^{\mathcal{O}(k)}n^2 \log n$ time.
    
\end{theorem}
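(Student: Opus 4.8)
The plan is to combine the color-coding machinery with the batch procedure's runtime, exactly as staged in the preceding text. First I would invoke the perfect hash family construction from \cite{alon1995color}: for the parameter $k$ and the $n$-vertex graph $G$, build a family $\mathcal{F}$ of size $2^{\mathcal{O}(k)}\log n$ in time $2^{\mathcal{O}(k)}n\log n$, with the guarantee that any fixed $k$-subset $U\subseteq V$ is injectively labeled by some $f\in\mathcal{F}$. This justifies the standard color-coding reduction: a balanced connected subgraph $H$ on $k$ vertices exists in $G$ if and only if, for some $f\in\mathcal{F}$, there is a vertex set $U$ of size $k$ that is rainbow-colored under $f$, induces a connected subgraph, and contains exactly $k/2$ red and $k/2$ blue vertices. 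So the theorem reduces to solving this rainbow version for each $f$.

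Next I would recall the dynamic-programming table $M[v\,;\,L,(b,r)]$ and the correctness and runtime lemmas already established: the batch procedure correctly fills every entry, and for a single hash function $f$ it does so in $\mathcal{O}(2^{4k}k^3 n^2)$ time. The entry $M[v\,;\,\{1,\dots,k\},(k/2,k/2)]$ being $1$ for some $v$ certifies a rainbow balanced connected subgraph under $f$ of the desired size. Then I would simply bound the total work: Step~1 costs $2^{\mathcal{O}(k)}n\log n$; Step~2 runs the batch procedure once per $f\in\mathcal{F}$, for a total of $|\mathcal{F}|\cdot\mathcal{O}(2^{4k}k^3 n^2)=2^{\mathcal{O}(k)}\log n\cdot 2^{\mathcal{O}(k)}n^2=2^{\mathcal{O}(k)}n^2\log n$ time (the polynomial factor $k^3$ and the constant in the exponent are absorbed into $2^{\mathcal{O}(k)}$); Step~3, the final scan over all $v$ and all $f$, costs $\mathcal{O}(n)\cdot|\mathcal{F}|=2^{\mathcal{O}(k)}\log n$. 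Summing, the dominant term is $2^{\mathcal{O}(k)}n^2\log n$, which gives the claimed bound.

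For the correctness of the overall algorithm, the key point is the ``if and only if'' of the color-coding step. In one direction, any genuine balanced connected subgraph $H$ of size $k$ is rainbow under at least one $f\in\mathcal{F}$ by the perfect-hash property, and then the DP will detect it; in the other direction, any $U$ detected by the DP is connected, has $k/2$ vertices of each color by conditions (iii)--(iv), and has size $k$ by $|L|=b+r=k$, hence is a valid solution. So the algorithm outputs ``yes'' exactly when a $k$-\bcs\ exists.

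The only mildly delicate part — and it is already handled by the two lemmas in the excerpt — is verifying that the batch procedure's union step (combining disjoint partial solutions around a common vertex $v$) genuinely enumerates all ways to assemble a connected rainbow subgraph rooted at $v$, rather than over- or under-counting; this is what makes the $2^{\mathcal{O}(k)}$ dependence (and not something worse) possible, since $\mathcal{S}$ never exceeds $2^k k$ entries. Given those lemmas, the proof of the theorem is just the assembly and the runtime arithmetic above. Hence the $k$-\bcs\ problem is solvable in $2^{\mathcal{O}(k)}n^2\log n$ time, and in particular it is fixed-parameter tractable parameterized by the solution size $k$.
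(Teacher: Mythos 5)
Your proposal is correct and follows essentially the same route as the paper: construct the perfect hash family of size $2^{\mathcal{O}(k)}\log n$, run the batch procedure (whose correctness and $\mathcal{O}(2^{4k}k^3n^2)$ per-function cost are given by the two preceding lemmas) once per hash function, and check the entries $M[v\,;\,\{1,\dots,k\},(\tfrac{k}{2},\tfrac{k}{2})]$. The runtime arithmetic and the two directions of the color-coding equivalence are exactly as in the paper's argument.
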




{\old{\section{Conclusion}
In this paper, we have studied the tractability landscape of the maximum balanced connected subgraph problem on geometric intersection graphs. We gave several algorithmic and \np-hardness results. 
We show that \bcs~problem remains \np-hard on restricted geometric intersection graph classes such as unit-disk graphs, outer-string graphs, complete grid graphs, and unit-square graphs. 
We design polynomial-time algorithms for the interval graphs ($\mathcal{O}(n^4\log n)$ time), circular-arc graphs ($\mathcal{O}(n^6\log n)$ time) 
and permutation graphs $\mathcal{O}(n^6)$. Finally, we have shown that the \bcs~problem is fixed-parameter tractable for general graphs ($ 2^{\mathcal{O}(k)}n^2 \log n $) while parameterized by number of vertices in a balanced connected subgraph.

 An intriguing open problem in this area is to design a constant-factor approximation algorithm for the \bcs~problem on unit disk graphs. Due to the nature of the problem, we have identified that the general geometric approximation schemes such as local search or shifting strategy do not work directly. Nonetheless, the \bcs~is one of those problems that does not respect the local neighborhood and it is difficult to predict anything without seeing the distribution of the red and blue vertices in the global solution. Therefore, it is indeed challenging to get a tight lower bound, and then design a strategy that does not deviate much from that bound.
 }}



\section*{Acknowledgement} 

The authors would like to thank Joseph S. B. Mitchell for interesting discussions during the early stages of the research.

\bibliographystyle{splncs04}
\bibliography{samplepaper}

\end{document}